\renewcommand\ALG@beginblock[1]
\edef\csname ALG@ind@\theALG@nested\endcsname{\the\ALG@tmplength}%
\renewcommand\ALG@endblock%
\theoremstyle{definition}\newtheorem{problem}{Problem}
\newcommand*\ie{\textit{i.e.,}\xspace}
\newcommand*\mcal[1]{\mathcal{#1}}
\newcommand*\Real{\mathbbm{R}}
\newcommand*\Realnn{\Real_{ \ge 0}}
\newcommand*\px{\textit{px}}
\newcommand*\Bool{\mathbbm{B}}
\newcommand*\sem[2][]{\llbracket #2\rrbracket^{#1}}
\newcommand*\proj[2]{\textup{\textsf{proj}}_{#1}( #2)}
\newcommand*\GPart{\textit{GP}}
\newcommand*\GP[1]{\GPart[ #1]}
\newcommand*\limpl{\Rightarrow}
\newcommand*\dom{\textup{dom}}
\newcommand*\ltrue{\textup{\textbf{tt}}}
\newcommand*\lfalse{\textup{\textbf{ff}}}
\newcommand*\Trop{\mathbbm{T}}
\newcommand*\Fuzz{\mathbbm{F}}
\newcommand*\nul{\mathbb{0}}
\newcommand*\one{\mathbb{1}}
\newcommand*\Nrg{\mathbbm{E}}
\newcommand*\E{\mathcal E}
\newcommand*\id{\textup{\textsf{id}}}
\newcommand*\Nat{\mathbbm{N}}
\newcommand*\V{\mathcal V}
\newcommand*\Vrg{\mathbbm{V}}
\newcommand*\cf{\textit{cf.}\xspace}
\newcommand*\bigland{\bigwedge}
\begin{document}

\title{Featured Weighted Automata}

\author{Uli Fahrenberg}%
\authornote{Most of this work was carried out while this author was
  still employed at Inria Rennes.}%
\affiliation{%
  \institution{Ecole polytechnique, Palaiseau, France}}%
\email{uli@lix.polytechnique.fr}%

\author{Axel Legay}
\affiliation{%
  \institution{Inria Rennes, France}}%
\email{axel.legay@inria.fr}

\begin{abstract}
  A featured transition system is a transition system in which the
  transitions are annotated with feature expressions: Boolean
  expressions on a finite number of given features.  Depending
  on its feature expression, each individual transition can be
  enabled when some features are present, and disabled for other sets
  of features.  The behavior of a featured transition system hence
  depends on a given set of features.  There are algorithms for
  featured transition systems which can check their properties for all
  sets of features at once, for example for LTL or CTL properties.

  Here we introduce a model of featured weighted automata which
  combines featured transition systems and (semiring-) weighted
  automata.  We show that methods and techniques from weighted
  automata extend to featured weighted automata and devise algorithms
  to compute quantitative properties of featured weighted automata for
  all sets of features at once.  We show applications to minimum
  reachability and to energy properties.
\end{abstract}

\maketitle

\section{Introduction}

A \emph{featured transition
  system}~\cite{DBLP:journals/tse/ClassenCSHLR13} is a transition
system in which the transitions are annotated with \emph{feature
  expressions}: Boolean expressions involving a finite number of given
features.  Depending on its feature expression, each individual
transition can be enabled when some features are present, and disabled
for other sets of features.  For any set of features, a given featured
transition system \emph{projects} to a transition system which
contains precisely the transitions which are enabled for that set of
features.

Standard problems such as reachability or safety can be posed for
featured transition systems, where the interest now is to check these
properties \emph{for all sets of features at once}.  Hence, for
example for reachability, given a featured transition system and a set
of accepting states, one wants to construct a feature expression
$\phi$ such that an accepting state is reachable iff the set of
features satisfies $\phi$.

For \emph{quantitative} properties of transition systems, the model of
\emph{(semiring-) weighted automata} has proven
useful~\cite{book/DrosteKV09}.  This provides a uniform framework to
treat problems such as minimum reachability, maximum flow, energy
problems~\cite{DBLP:journals/corr/EsikFL15a}, and others.  Here we
extend techniques from weighted automata to \emph{featured weighted
  automata}, \ie~weighted automata in which the transitions are
annotated with feature expressions.  This extension makes it possible
to check quantitative properties for all sets of features at once.

To be precise, a featured transition system induces a (projection)
function from sets of features to transition systems, mapping each set
of features to the behavior under these features.  Similarly, we will
define projections of featured weighted automata, mapping sets of
features to weighted automata.  \emph{Values} of weighted automata are
an abstract encoding of their behavior; we will see how to compute
values of featured weighted automata as functions from feature
expressions to behaviors.

We also develop an application of our techniques to featured
\emph{energy problems}.  Energy problems are important in areas such
as embedded systems or autonomous systems.  They are concerned with
the question whether a given system admits infinite schedules during
which (1) certain tasks can be repeatedly accomplished and (2) the
system never runs out of energy (or other specified resources).
Starting with~\cite{DBLP:conf/formats/BouyerFLMS08}, formal modeling
and analysis of such problems has attracted some
attention~\cite{conf/ictac/FahrenbergJLS11, DBLP:conf/lata/Quaas11,
  DBLP:conf/icalp/ChatterjeeD10, DBLP:conf/hybrid/BouyerFLM10,
  DBLP:conf/atva/EsikFLQ13, DBLP:conf/qest/BouyerLM12,
  DBLP:conf/csl/DegorreDGRT10}. 

Featured transition systems have applications in \emph{software
  product lines}, where they are used as abstract representations of
the behaviors of variability
models~\cite{DBLP:journals/csur/ThumAKSS14}.  This representation
allows one to analyze all behaviors of a software product line at
once, as opposed to analyzing each product on its own.  Similarly,
featured weighted automata can be used as abstract representations of
quantitative behaviors of software product lines, and the present work
enables analysis of quantitative behaviors of all products in a
software product line at once.

\vspace*{-1ex}
\paragraph{Contributions and structure of the paper.}

We start in Sect.~\ref{se:minreach} by revisiting minimum reachability
in featured transition systems with transitions weighted by real
numbers.  This has to some extent already been done
in~\cite{DBLP:conf/icse/CordySHL13}, but we reformulate it in order to
prepare for the generalization in the following sections.

In Sect.~\ref{se:fwa}, we introduce featured weighted automata and
show some first examples.  Instead of semirings, we will work with
(featured) automata weighted in \emph{$^*$-continuous Kleene
  algebras}; this is for convenience of presentation only, and all our
work (except for Sect.~\ref{se:energy}) can be extended to a more
general (for example non-idempotent) setting.
In Sect.~\ref{se:fba-fwa}, we then show how methods and techniques
from weighted automata can be transferred to featured weighted
automata.

In the last Sect.~\ref{se:energy}, we extend our results to develop an
application to featured energy problems.  This is based on the recent
result in~\cite{DBLP:journals/corr/EsikFL15a} that energy problems can
be stated as B{\"u}chi problems in automata weighted in
\emph{$^*$-continuous Kleene $\omega$-algebras}, which are certain
types of semimodules over $^*$-continuous Kleene algebras; hence we
need to extend our results to such semimodules.

The paper is followed by a separate appendix which contains some of
the proofs of our results.

\section{Minimum Reachability in Real-Weighted Featured Automata}
\label{se:minreach}


A \emph{real-weighted automaton} $\mcal S=( S, I, F, T)$ consists of a
finite set $S$ of states, subsets $I, F\subseteq S$ of initial and
accepting states, and a finite set $T\subseteq S\times\Realnn\times S$
of weighted transitions.  Here $\Realnn$ denotes the set of
non-negative real numbers.

A \emph{finite path} in such a real-weighted automaton $\mcal S$ is a
finite alternating sequence
$\pi=( s_0, x_0, s_1, x_1,\dotsc, x_k, s_{ k+ 1})$ of transitions
$( s_0, x_0, s_1),\dotsc,( s_k, x_k, s_{ k+ 1})\in T$.  The
\emph{weight} of $\pi$ is the sum
$w( \pi)= x_0+\dotsm+ x_k\in \Realnn$.  A finite path $\pi$ as above
is said to be \emph{accepting} if $s_0\in I$ and $s_{ k+ 1}\in F$.
The \emph{minimum reachability problem} for real-weighted automata
asks, given a real-weighted automaton $\mcal S$ as above, to compute
the value
\begin{equation*}
  | \mcal S|= \inf\{ w( \pi)\mid \pi\text{ accepting finite path in $\mcal
    S$}\}\,.
\end{equation*}
That is, $| \mcal S|$ is the minimum weight of all finite paths from
an initial to an accepting state in $\mcal S$.  This being a
multi-source-multi-target shortest path problem, it can for example be
solved using the Floyd-Warshall relaxation algorithm.


Let $N$ be a set of \emph{features} and $\px\subseteq 2^N$ a set of
\emph{products} over $N$.  A \emph{feature guard} is a Boolean
expression over $N$, and we denote the set of these by $\Bool( N)$.
We write $p\models \gamma$ if
$p\in \px$ satisfies
$\gamma\in \Bool( N)$ and
$\sem \gamma=\{ p\in \px\mid p\models \gamma\}$.  Note that
$\sem \gamma$ is a set of sets of features.

\begin{definition}
  A \emph{real-weighted featured automaton} $( S, I, F, T, \gamma)$
  consists of a finite set $S$ of states, subsets $I, F\subseteq S$ of
  initial and accepting states, a finite set
  $T\subseteq S\times\Realnn\times S$ of weighted transitions, and a
  feature guard mapping $\gamma: T\to \Bool( N)$.
\end{definition}

The \emph{projection} of a real-weighted featured automaton $\mcal F=$
\linebreak $( S, I, F, T, \gamma)$ to a product $p\in \px$ is the
real-weighted automaton $\proj p{ \mcal F}=( S, I, F, T')$ with
$T'=\{ t\in T\mid p\models \gamma( t)\}$.

For each product $p\in \px$, we could solve the shortest path problem
in $\proj p{ \mcal F}$ by computing $| \proj p{ \mcal F}|$.  Instead,
we develop an algorithm which computes all these values at the same
time.  Its output will, thus, be a function
$| \mcal F|: \px\to \Realnn$, with the property that for every
$p\in \px$, $| \mcal F|( p)=| \proj p{ \mcal F}|$.

As a symbolic representation of functions $\px\to \Realnn$, we use
injective functions from \emph{guard partitions} to $\Realnn$.
Intuitively, a guard partition is a set of feature guards which
partitions $\px$ into classes such that within each class, $f$ has the
same value for all products, and between different classes, $f$ has
different values.

\begin{definition}
  A \emph{guard partition} of $\px$ is a set $P\subseteq \Bool( N)$
  such that $\sem{ \bigvee P}= \px$, $\sem \gamma\ne \emptyset$ for
  all $\gamma\in P$, and
  $\sem{ \gamma_1}\cap \sem{ \gamma_2}= \emptyset$ for all
  $\gamma_1, \gamma_2\in P$ with $\gamma_1\ne \gamma_2$.  The set of
  all guard partitions of $\px$ is denoted
  $\GPart\subseteq 2^{ \Bool( N)}$.
\end{definition}

A guard partition is a logical analogue to a partition of the set of
products $\px$: any guard partition induces a partition of $\px$, and
any partition of $\px$ can be obtained by a guard partition.  In
particular, for any guard partition $P$ and any product $\px$, there
is precisely one $\gamma\in P$ for which $\px\models \gamma$.

Let
$\GP \Realnn=\{ f: P\to \Realnn\mid P\in \GPart, \forall \gamma_1,
\gamma_2\in P: \gamma_1\ne \gamma_2\limpl f( \gamma_1)\ne f(
\gamma_2)\}$
denote the set of injective functions from guard partitions to
$\Realnn$.

We use \emph{injective} functions $P\to \Realnn$ as symbolic
representations of functions $\px\to \Realnn$, because they provide
the most \emph{concise} such representation.  Indeed, if a function
$f: P\to \Realnn$ is not injective, then there are feature guards
$\gamma_1, \gamma_2\in P$ for which $f( \gamma_1)= f( \gamma_2)$, so
we can obtain a more concise representation of $f$ by letting
$P'= P\setminus\{ \gamma_1, \gamma_2\}\cup\{ \gamma_1\lor \gamma_2\}$
and $f': P'\to \Realnn$ be defined by $f'( \delta)= f( \delta)$ for
$\delta\ne \gamma_1\lor \gamma_2$ and
$f'( \gamma_1\lor \gamma_2)= f( \gamma_1)$.

\begin{figure}[tbp]
  \begin{algorithmic}[1]
    \State \textbf{Input:} real-weighted featured automaton
    $\mcal F=( S, I, F, T, \gamma)$ with $S=\{ s_1,\dotsc, s_n\}$
    \State \textbf{Output:} function $| \mcal F|\in \GP \Realnn$
    \Statex 
    \State \textbf{var} $D:\{ 1,\dotsc, n\}\times\{ 1,\dotsc, n\}\to
    \GP \Realnn$
    \State \textbf{var} $P$, $f$
    \For {$i\gets 1$ to $n$}
    \For {$j\gets 1$ to $n$}
    \State $\dom( D( i, j))\gets\{ \ltrue\}$
    \State $D( i, j)( \ltrue)\gets \infty$
    \ForAll {$( s_i, x, s_j)\in T$}
    \ForAll {$\gamma\in \dom( D( i, j))$}
    \If {$\sem{ \gamma\land \gamma( s_i, x, s_j)}\ne \emptyset$ and
      $D( i, j)( \gamma)> x$}
    \State \textsc{Split}($D( i, j), \gamma, \gamma( s_i, x, s_j), x$)
    \EndIf
    \EndFor
    \EndFor
    \EndFor
    \EndFor
    \For {$i\gets 1$ to $n$} \label{al:minreach.fw.start}
    \For {$j\gets 1$ to $n$}
    \For {$k\gets 1$ to $n$}
    \State \textsc{Relax}($i, j, k$)
    \EndFor
    \EndFor
    \EndFor \label{al:minreach.fw.end}
    \State $P\gets\{ \ltrue\}$; $f( \ltrue)\gets
    \infty$ \label{al:minreach.findmin.start}
    \ForAll {$s_i\in I$}
    \ForAll {$s_j\in F$}
    \ForAll {$\gamma_1\in P$}
    \ForAll {$\gamma_2\in \dom( D( i, j))$}
    \If {$\sem{ \gamma_1\land \gamma_2}\ne \emptyset$ and $f(
      \gamma_1)> D( i, j)( \gamma_2)$}
    \State \textsc{Split}($f, \gamma_1, \gamma_2, D( i, j)( \gamma_2)$)
    \EndIf
    \EndFor
    \EndFor
    \EndFor
    \EndFor
    \State \Return $f$
    \Statex
    \Procedure{Relax}{$i,j,k$} \label{al:minreach.relax.start}
    \ForAll {$\gamma_1\in \dom( D( i, j))$}
    \ForAll {$\gamma_2\in \dom( D( i, k))$}
    \ForAll {$\gamma_3\in \dom( D( k, j))$}
    \If {$\sem{ \gamma_1\land \gamma_2\land \gamma_3}\ne
      \emptyset$}
    \If {$D( i, j)( \gamma_1)> D( i, k)( \gamma_2)+ D( k, j)(
      \gamma_3)$} \label{al:minreach.relax.notempty}
    \State \textsc{Split}($D( i, j), \gamma_1, \gamma_2\land
    \gamma_3, D( i, k)( \gamma_2)+ D( k, j)( \gamma_3)$)
    \EndIf
    \EndIf
    \EndFor
    \EndFor
    \EndFor
    \EndProcedure \label{al:minreach.relax.end}
    \Statex
    \Procedure{Split}{$f: P\to \Realnn, \gamma_1, \gamma_2\in \Bool(
      N), x\in \Realnn$} \label{al:minreach.split.start}
    \If {$\sem{ \gamma_1}= \sem{ \gamma_1\land
        \gamma_2}$} \label{al:minreach.split.cond}
    \State $f( \gamma_1)\gets x$ \label{al:minreach.split.nosplit.assign}
    \State \textsc{Combine}($f, \gamma_1$)
    \Else \label{al:minreach.split.split}
    \State $y\gets f( \gamma_1)$
    \State $P\gets P\setminus\{ \gamma_1\}\cup\{ \gamma_1\land
    \gamma_2, \gamma_1\land \neg \gamma_2\}$
    \State $f( \gamma_1\land \neg \gamma_2)\gets y$
    \State $f( \gamma_1\land \gamma_2)\gets x$
    \State \textsc{Combine}($f, \gamma_1\land \gamma_2$)
    \EndIf
    \EndProcedure \label{al:minreach.split.end}
    \Statex
    \Procedure{Combine}{$f: P\to \Realnn, \gamma\in \Bool( N)$}
    \State $x\gets f( \gamma)$
    \ForAll {$\delta\in P\setminus\{
      \gamma\}$} \label{al:minreach.combine.forall}
    \If {$f( \delta)= f( \gamma)$}
    \State $P\gets P\setminus\{ \delta, \gamma\}\cup\{ \delta\lor \gamma\}$
    \label{al:minreach.combine.join}
    \State $f( \delta\lor \gamma)\gets x$
    \State \textbf{break}
    \EndIf
    \EndFor
    \EndProcedure
  \end{algorithmic}
  \caption{
    \label{fi:alg-minreach}
    Algorithm to compute $| \mcal F|$ for a real-weighted featured
    automaton $\mcal F$.}
\end{figure}

We show in Fig.~\ref{fi:alg-minreach} an algorithm to compute a
symbolic representation of $| \mcal F|$.  The algorithm performs, in
lines~\ref{al:minreach.fw.start} to~\ref{al:minreach.fw.end}, a symbolic
Floyd-Warshall relaxation to compute a matrix $D$ which as entries
$D( i, j)$ has functions in $\GP \Realnn$ that for each product return
the shortest path from state $s_i$ to state $s_j$.

The relaxation procedure \textsc{Relax}($i, j, k$) is performed
by comparing $D( i, j)$ to the sum $D( i, k)+ D( k, j)$ and updating
$D( i, j)$ if the sum is smaller.  The result of the comparison
depends on the products for which the different paths are enabled,
hence the comparison and update are done for each feature expression
$\gamma_1$ in the partition for $D( i, j)$ and all feature expressions
$\gamma_2$, $\gamma_3$ in the partitions for $D( i, k)$ and
$D( k, j)$, respectively.  The comparison has to be done only if these
partitions overlap (line~\ref{al:minreach.relax.notempty}), and in
case the sum is smaller, $D( i, j)$ is updated in a call to a
split-and-combine procedure.

Using the procedure \textsc{Split}, in
lines~\ref{al:minreach.split.start} to~\ref{al:minreach.split.end},
$D( i, j)$ is updated at the $\gamma_1\land( \gamma_2\land \gamma_3)$
part of its partition.  If
$\sem{ \gamma_1\land( \gamma_2\land \gamma_3)}$ is not smaller than
$\sem{ \gamma_1}$ (line~\ref{al:minreach.split.cond}), then
$D( i, j)( \gamma_1)$ is set to its new value.  Afterwards, we need to
call a \textsc{Combine} procedure to see whether $D( i, j)$ has the
same value at any other part $\delta$ of its partition
(line~\ref{al:minreach.combine.forall}) and, in the affirmative case,
to update the partition of $D( i, j)$ by joining the two parts
(line~\ref{al:minreach.combine.join}f).

If the feature expression $\gamma_1\land( \gamma_2\land \gamma_3)$ on
which to update $D( i, j)$ is a strict subset of $\gamma_1$
(line~\ref{al:minreach.split.split}), then the $\gamma_1$ part of the
partition of $D( i, j)$ needs to be split into two parts:
$\gamma_1\land( \gamma_2\land \gamma_3)$, on which $D( i, j)$ is to be
updated, and $\gamma_1\land \neg( \gamma_2\land \gamma_3)$, on which
its value stays the same.  Again, we need to call the \textsc{Combine}
procedure afterwards to potentially combine feature expressions in the
partition of $D( i, j)$.

Once relaxation has finished in line~\ref{al:minreach.findmin.start},
we need to find $f:= \min\{ D( i, j)\mid s_i\in I, s_j\in F\}$.  As
this again depends on which features are present, we need to compute
this minimum in a way similar to what we did in the \textsc{Relax}
procedure: for each feature expression in the partition $P$ of $f$ and
each overlapping feature expression in the partition of $D( i, j)$, we
compare the two values and use the \textsc{Split} procedure to update
$f$ if $D( i, j)$ is smaller.

A variant of the algorithm in Fig.~\ref{fi:alg-minreach} has been
implemented in~\cite{DBLP:conf/splc/OlaecheaFAL16}, as part of an effort to compute
minimum limit-average cost in real-weighted featured automata.
Several experiments in~\cite{DBLP:conf/splc/OlaecheaFAL16} show that our algorithm is
significantly faster than an approach which separately solves the
minimum reachability problem for each product.

\section{Featured Weighted Automata}
\label{se:fwa}

We proceed to introduce a generalization of the setting in the
previous section.  Here $\Realnn$ is replaced by an abstract
\emph{$^*$-continuous Kleene algebra}.  This allows us to develop an
abstract setting for analysis of \emph{featured weighted automata},
and to re-use our techniques developed
in the previous section to solve quantitative problems in other
concrete settings.
%

\subsection{Weighted Automata}

Recall that a \emph{semiring}~\cite{book/DrosteKV09}
$K=( K, \oplus, \otimes, 0, 1)$ consists of a commutative monoid
$( K, \oplus, 0)$ and a monoid $( K, \otimes, 1)$ such that the
distributive and zero laws
\begin{equation*}
  x( y\oplus z)= x y\oplus x z \qquad ( y\oplus z) x= y x\oplus
  z x \qquad 0\otimes x= 0= x\otimes 0
\end{equation*}
hold for all $x, y, z\in K$ (here we have omitted the multiplication
sign $\otimes$ in some expressions, and we shall also do so in the
future).  It follows that the product distributes over all finite
sums.

A (finite) \emph{weighted automaton}~\cite{book/DrosteKV09} over a
semiring $K$ (or a \emph{$K$-weighted automaton} for short) is a tuple
$\mcal S=( S, I, F, T)$ consisting of a finite set $S$ of states, a subset
$I\subseteq S$ of initial states, a subset $F\subseteq S$ of
accepting states, and a finite set $T\subseteq S\times K\times S$ of
transitions.

A \emph{finite path} in such a $K$-weighted automaton
$\mcal S=( S, I, F, T)$ is a finite alternating sequence
$\pi=( s_0, x_0, s_1,\dotsc, x_k, s_{ k+ 1})$ of transitions
$( s_0, x_0, s_1),\dotsc,( s_k, x_k, s_{ k+ 1})\in T$.  The
\emph{weight} of~$\pi$ is the product $w( \pi)= x_0\dotsm x_k\in K$.
A finite path $\pi$ as above is said to be \emph{accepting} if
$s_0\in I$ and $s_{ k+ 1}\in F$.  The \emph{reachability value}
$| \mcal S|$ of $\mcal S$ is defined to be the sum of the weights of
all its accepting finite paths:
\begin{equation*}
  | \mcal S|= \bigoplus\{ w( \pi)\mid \pi\text{ accepting finite path
    in $\mcal S$}\}
\end{equation*}
As the set of accepting finite paths generally will be infinite, one
has to assume that such sums exist in $K$ for this definition to make
sense.  This is the subject of Sect.~\ref{se:scka} below.

\subsection{Examples}
\label{se:examples}

The \emph{Boolean semiring} is
$\Bool=(\{ \lfalse, \ltrue\}, \mathord\lor, \mathord\land, \lfalse,
\ltrue)$,
with disjunction as $\oplus$ and conjunction as $\otimes$.  A
$\Bool$-weighted automaton $\mcal S$ hence has its transitions
annotated with $\lfalse$ or $\ltrue$.  For a finite path
$\pi=( s_0, x_0, s_1,\dotsc, x_k, s_{ k+ 1})$, we have
$w( \pi)= \ltrue$ iff all $x_0=\dotsm= x_k= \ltrue$.  Hence
$| \mcal S|= \ltrue$ iff there exists an accepting finite path in
$\mcal S$ which involves only $\ltrue$-labeled transitions.  That is,
$\Bool$-weighted automata are equivalent to ordinary (unlabeled)
automata, where the equivalence consists in removing all
$\lfalse$-labeled transitions.

The \emph{tropical semiring} is
$\Trop=( \Realnn\cup\{ \infty\}, \mathord\wedge, \mathord+, \infty,
0)$, where $\Realnn\cup\{ \infty\}$
denotes the set of extended real numbers, with minimum as $\oplus$ and
addition as $\otimes$.  The weight of a finite path is now the sum of
its transition weights, and the reachability value of a
$\Trop$-weighted automaton is the minimum of all its accepting finite
paths' weights.  Hence $\Trop$-weighted automata are precisely the
real-weighted automata of Sect.~\ref{se:minreach}, and to compute
their reachability values is to solve the \emph{minimum reachability}
problem.

The \emph{fuzzy semiring} is
$\Fuzz=( \Realnn\cup\{ \infty\}, \mathord\vee, \mathord\wedge, 0,
\infty)$,
with maximum as $\oplus$ and minimum as $\otimes$.  Here, the weight
of a finite path is the minimum of its transition weights, and the
reachability value of an $\Fuzz$-weighted automaton is the maximum of
all its accepting finite paths' weights.  This value is hence the
\emph{maximum flow} in a weighted automaton: the maximum available
capacity along any finite path from an initial to a accepting state.

\subsection{Featured Weighted Automata}

We now extend weighted automata with features, for modeling
quantitative behavior of software product lines.
%
Let $K$ be a semiring and denote by
$\GP K=\{ f: P\to K\mid P\in \GPart, \forall \gamma_1, \gamma_2\in P:
\gamma_1\ne \gamma_2\limpl f( \gamma_1)\ne f( \gamma_2)\}$
the set of injective functions from guard partitions to $K$.

\begin{definition}
  A \emph{featured weighted automaton} over $K$ and $\px$ is a tuple
  $( S, I, F, T)$ consisting of a finite set $S$ of states, subsets
  $I, F\subseteq S$ of initial and accepting states, and a finite set
  $T\subseteq S\times \GP K\times S$ of transitions.
\end{definition}

Similarly to what we did in Sect.~\ref{se:minreach}, the transition
labels in $\GP K$ are to be seen as syntactic representations of
functions from products to $K$; we will say more about this below.


\begin{example}
  For $K= \Bool$ the Boolean semiring, featured $\Bool$-weighted
  automata are standard (unlabeled) featured automata: for any feature
  guard $\gamma\in \Bool( N)$, $\{ \gamma, \neg \gamma\}$ is a guard
  partition of $\px$, moreover, for $K= \Bool$, any mapping in $\GP K$
  is equivalent to one from such a guard partition.  Hence transitions
  labeled with feature guards (as in standard featured automata) are
  the same as transitions labeled with functions from guard partitions
  to $\{ \lfalse, \ltrue\}$.
\end{example}

\begin{definition}
  For $f: P\to K\in \GP K$ and $p\in \px$, let $\gamma\in P$ be the
  unique feature guard for which $p\models \gamma$ and define
  $\sem f( p)= f( \gamma)$.  This defines the \emph{semantic
    representation of $f$} as the function $\sem f: \px\to K$.
\end{definition}

\begin{definition}
  Let $\mcal F=( S, I, F, T)$ be a featured $K$-weighted automaton and
  $p\in \px$.  The \emph{projection} of $\mcal F$ to $p$ is the
  $K$-weighted automaton $\proj p{ \mcal F}=( S, I, F, T')$, where
  $T'=\{( s, \sem f( p), s')\mid( s, f, s')\in T\}$.
\end{definition}

The behavior of a featured $K$-weighted automaton is hence given
relative to products: given a featured $K$-weighted automaton
$\mcal F$ and a product $p$, $| \proj p{ \mcal F}|$, provided that it
exists, will be the behavior of $\mcal F$ when restricted to the
particular product $p$.  The purpose of this paper is to show how the
values $| \proj p{ \mcal F}|$ can be computed for all $p\in \px$ at
once.

\subsection{$^*$-Continuous Kleene Algebras}
\label{se:scka}

We finish this section by introducing extra structure and properties
into our semiring $K$ which will ensure that the infinite sums
$| \mcal S|$ always exist.  This is for convenience only, and all our
work can be extended to a more general (for example non-idempotent)
setting.
Recall that a semiring $K=( K, \oplus, \otimes, 0, 1)$ is
\emph{idempotent}~\cite{book/DrosteKV09} if $x\oplus x= x$ for every
$x\in K$.

A \emph{$^*$-continuous Kleene
  algebra}~\cite{DBLP:journals/iandc/Kozen94} is an idempotent
semiring $K=( K, \oplus, \otimes, 0, 1)$ in which all infinite sums of
the form $\bigoplus_{ n\ge 0} x^n$, $x\in K$, exist, and such that
\begin{equation}
  \label{eq:loopabs}
  x\big( \bigoplus_{ n\ge 0} y^n\big) z= \bigoplus_{ n\ge 0} x y^n z
\end{equation}
for all $x, y, z\in K$.  Intuitively, automata weighted over a
$^*$-continuous Kleene algebra allow for \emph{loop abstraction}, in
that the global effects of a loop (right-hand side
of~\eqref{eq:loopabs}) can be computed locally (left-hand side
of~\eqref{eq:loopabs}).  In any $^*$-continuous Kleene algebra $K$ one
can define a unary \emph{star} operation $\mathord{^*}: K\to K$ by
$x^*= \bigoplus_{ n\ge 0} x^n$.

For any semiring $K$ and $n\ge 1$, we can form the \emph{matrix
  semiring} $K^{ n\times n}$ whose elements are $n$-by-$n$ matrices of
elements of $K$ and whose sum and product are given as the usual
matrix sum and product.  It is known~\cite{DBLP:conf/mfcs/Kozen90}
that when $K$ is a $^*$-continuous Kleene algebra, then
$K^{ n\times n}$ is also a $^*$-continuous Kleene algebra, with the
$^*$-operation defined by
$M^*_{ i, j}= \smash{\bigoplus_{ m\ge 0}} \bigoplus\{ M_{ k_1,
  k_2}\dotsm M_{ k_{ m- 1}, k_m}\mid 1\le k_1,\dotsc, k_m\le n,
k_1= i, k_m= j\}$
for all $M\in K^{ n\times n}$ and $1\le i, j\le n$.  Also, if $n\ge 2$
and
$M= \left[ \begin{smallmatrix} a & b \\ c & d \end{smallmatrix}
\right]$,
where $a$ and $d$ are square matrices of dimension less than $n$, then
\begin{equation}
  \label{eq:mstar}
  M^*= 
  \begin{bmatrix}
    ( a\oplus b d^* c)^* & ( a\oplus b d^* c)^* b d^* \\
    ( d\oplus c a^* b)^* c a^* & ( d\oplus c a^* b)^*
  \end{bmatrix}
  .
\end{equation}

The \emph{matrix representation}~\cite{book/DrosteKV09} of a
$K$-weighted automaton $\mcal S=( S, I, F, T)$, with $n= \# S$ the
number of states, is given by the triple $( \alpha, M, k)$, where
$\alpha\in\{ 0, 1\}^n$ is the \emph{initial vector},
$M\in K^{ n\times n}$ is the \emph{transition matrix}, and
$0\le k\le n$.  These are given as follows: order $S=\{ 1,\dotsc, n\}$
such that $i\in F$ iff $i\le k$, \ie~such that the first $k$ states
are accepting, and define $\alpha$ and $M$ by $\alpha_i= 1$ iff
$i\in I$ and $M_{ i, j}= \bigoplus\{ x\mid( i, x, j)\in T\}$.

It can be shown~\cite{inbook/EsikK09} that if $\mcal S$ is a weighted
automaton over a $^*$-continuous Kleene algebra, then the reachability
value of $\mcal S$ is defined and $| \mcal S|= \alpha M^* \kappa$,
where $\kappa\in\{ 0, 1\}^n$ is the vector given by $\kappa_i= 1$ for
$i\le k$ and $\kappa_i= 0$ for $i> k$.

\begin{example}
  Our example semirings $\Bool$, $\Trop$ and $\Fuzz$ share the
  property of being \emph{bounded}.  In general terms, a semiring $K$
  is said to be bounded~\cite{book/DrosteKV09} if $x\oplus 1= 1$ for
  all $x\in K$.  Note that this implies idempotency: for all $x\in K$,
  $x\oplus x= x( 1\oplus 1)= x\otimes 1= x$.  If $K$ is bounded, then
  $x^*= 1\oplus\dotsm= 1$ for all $x\in K$, and $K$ is a
  $^*$-continuous Kleene algebra~\cite{journals/mfm/EsikK02}.

  In $\Bool$, $x\oplus 1= x\lor \ltrue= \ltrue$; in $\Trop$,
  $x\oplus 1= x\land 0= 0$; and in $\Fuzz$,
  $x\oplus 1= x\lor \infty= \infty$; so these three semirings are
  indeed bounded.  Operationally, the fact that $x^*= 1$ means that
  \emph{loops can be disregarded}: for all $x, y, z\in K$,
  $\bigoplus_{ n\ge 0} x y^n z= x y^* z= x z$.  In lieu of the
  examples in Sect.~\ref{se:examples}, it is clear that this property
  holds for $\Bool$-, $\Trop$- and $\Fuzz$-weighted automata: for
  reachability, loops are unimportant; for minimum reachability,
  likewise; and for maximum flow, taking a loop can only decrease the
  flow, hence would be disadvantageous.
\end{example}

\section{Analysis of Featured Weighted Automata}
\label{se:fba-fwa}

Let $K$ be a $^*$-continuous Kleene algebra.  In this section we take
a closer look at the functions in $\GP K$ and define semiring
operations on them.  We show that with these operations, $\GP K$
itself is a $^*$-continuous Kleene algebra.  This means that we can
treat featured $K$-weighted automata as $\GP K$-weighted automata.

We first need to define an operation on partitions which turns
functions $f: P\to K$ from a partition $P\in \GPart$ into
\emph{injective} functions, providing the most concise representation,
by changing their domain.  Intuitively, this \emph{canonicalization}
of $f$ changes the partition $P$ into a coarser one by forming
disjunctions of feature guards on which $f$ has the same value:

\begin{definition}
  \label{de:canon}
  Let $P\in \GPart$ and $f: P\to K$.  Introduce an equivalence
  relation $\mathord\sim\subseteq P\times P$ by
  $\gamma_1\sim \gamma_2$ iff $f( \gamma_1)= f( \gamma_2)$ and let
  $P'= P/ \mathord\sim$ be the quotient.  Let
  $\tilde P=\{ \bigvee \Gamma\mid \Gamma\in P'\}$, then
  $\tilde P\in \GPart$.  For every $\tilde \gamma\in \tilde P$ there
  is an equivalence class $\Gamma\in P'$ for which
  $\tilde \gamma= \bigvee \Gamma$, and $f$ passes to these equivalence
  classes by definition, so we can define $\tilde f: \tilde P\to K$,
  the \emph{canonicalization} of $f$, by
  $\tilde f( \tilde \gamma)= f( \Gamma)$.
\end{definition}

\begin{figure}[tbp]
  \begin{algorithmic}[1]
    \Function{$K$Combine}{$f: P\to K$}: $\GP K$
    \State \textbf{var} $\tilde f$, $\tilde P$
    \State $\tilde P\gets \emptyset$
    \While {$P\ne \emptyset$}
    \State Pick and remove $\gamma$ from $P$
    \State $x\gets f( \gamma)$
    \ForAll {$\delta\in P$}
    \If {$f( \delta)= x$}
    \State $\gamma\gets \gamma\lor \delta$
    \State $P\gets P\setminus\{ \delta\}$
    \EndIf
    \EndFor
    \State $\tilde P\gets \tilde P\cup\{ \gamma\}$
    \State $\tilde f( \gamma)\gets x$
    \EndWhile
    \State \Return $\tilde f: \tilde P\to K$
    \EndFunction
  \end{algorithmic}
  \caption{
    \label{fi:alg-combine}
    Function which computes canonicalization.}
\end{figure}

We show an algorithm which implements canonicalization in
Fig.~\ref{fi:alg-combine}.  The function $K$\textsc{Combine} takes as
input a function $f: P\to K$ and builds its canonicalization
$\tilde f: \tilde P\to K$ by taking disjunctions of feature
expressions in the partition $P$.  Note the similarity of its inner
loop to the \textsc{Combine} procedure of Fig.~\ref{fi:alg-minreach}:
the procedure in Fig.~\ref{fi:alg-minreach} only updates the partition
of $f$ in one place, whereas $K$\textsc{Combine} needs to check the
whole partition.

\begin{lemma}
  \label{le:canon}
  Let $P\in \GPart$, $f: P\to K$, and $\tilde f: \tilde P\to K$ the
  canonicalization of $f$.  Then $\tilde f$ is injective, hence
  $\tilde f\in \GP K$.  Also, for any $\gamma\in P$ there is a unique
  element $\tilde \gamma\in \tilde P$ such that $\sem \gamma\subseteq
  \sem{ \tilde \gamma}$.
\end{lemma}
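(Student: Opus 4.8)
The plan is to verify the two claims of Lemma~\ref{le:canon} directly from Definition~\ref{de:canon}. For \textbf{injectivity} of $\tilde f$, suppose $\tilde \gamma_1, \tilde \gamma_2 \in \tilde P$ are distinct with $\tilde f( \tilde \gamma_1) = \tilde f( \tilde \gamma_2)$. By construction each $\tilde \gamma_i = \bigvee \Gamma_i$ for some equivalence class $\Gamma_i \in P' = P/\mathord\sim$, and by definition $\tilde f( \tilde \gamma_i) = f( \Gamma_i)$, meaning $f$ takes the common value on all members of $\Gamma_i$. The equality $\tilde f( \tilde \gamma_1) = \tilde f( \tilde \gamma_2)$ then says any representative of $\Gamma_1$ and any representative of $\Gamma_2$ have equal $f$-value, so they are $\sim$-related; since $\sim$ is an equivalence relation and $P'$ is its set of classes, this forces $\Gamma_1 = \Gamma_2$, hence $\tilde \gamma_1 = \tilde \gamma_2$, contradicting distinctness. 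Thus $\tilde f$ is injective. It remains only to confirm $\tilde P \in \GPart$ so that $\tilde f \in \GP K$ — but this is already asserted in Definition~\ref{de:canon}, so I may invoke it.

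For the \textbf{second claim}, fix $\gamma \in P$. \emph{Existence}: let $\Gamma \in P'$ be the equivalence class containing $\gamma$ and set $\tilde \gamma = \bigvee \Gamma \in \tilde P$. Since $\gamma$ is one of the disjuncts of $\bigvee \Gamma$, every product satisfying $\gamma$ satisfies the disjunction, i.e.\ $\sem \gamma \subseteq \sem{ \tilde \gamma}$. \emph{Uniqueness}: I first note $\sem \gamma \ne \emptyset$, which holds because $\gamma \in P$ and $P \in \GPart$ requires $\sem \gamma \ne \emptyset$ for all its members. Now suppose $\tilde \gamma'$ is another element of $\tilde P$ with $\sem \gamma \subseteq \sem{ \tilde \gamma'}$. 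Then $\sem{ \tilde \gamma} \cap \sem{ \tilde \gamma'} \supseteq \sem \gamma \ne \emptyset$. But $\tilde P \in \GPart$, so distinct members have disjoint semantics; the nonempty intersection forces $\tilde \gamma = \tilde \gamma'$.

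The only genuinely delicate point is making the uniqueness argument airtight, and it turns precisely on the fact that $\sem \gamma$ is \emph{nonempty}: without this, $\gamma$ could be contained in several disjoint guards vacuously, and uniqueness would fail. This is why I single out the observation $\sem \gamma \ne \emptyset$ from the $\GPart$ conditions and use it together with the pairwise-disjointness of $\tilde P$. Both pieces are consequences of the partition axioms recalled in the definition of $\GPart$, so no further machinery is needed.

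I expect the \textbf{main obstacle} to be purely presentational rather than mathematical: the whole lemma is a matter of unwinding the quotient construction of Definition~\ref{de:canon} and the partition axioms, and the risk is conflating the two roles of $\tilde P$ — as a set of disjunctions $\bigvee \Gamma$ and as a bona fide guard partition — so I will keep the correspondence $\tilde \gamma \leftrightarrow \Gamma$ explicit throughout. Since the fact that $\tilde P \in \GPart$ is already granted in Definition~\ref{de:canon}, I treat it as given and do not reprove the partition properties (that $\sem{ \bigvee \tilde P} = \px$, nonemptiness, and pairwise disjointness), which keeps the argument short.
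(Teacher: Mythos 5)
Your proof is correct and follows essentially the same route as the paper's: injectivity via the observation that equal $\tilde f$-values force the underlying equivalence classes $\Gamma_1,\Gamma_2\in P'$ to coincide, and uniqueness in the second claim via $\sem\gamma\ne\emptyset$ combined with pairwise disjointness of the guards in $\tilde P$. The only cosmetic difference is that you argue injectivity by contradiction where the paper argues it directly, and you are slightly more explicit about where $\tilde P\in\GPart$ is being used; mathematically the two proofs are the same.
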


\begin{definition}
  \label{de:gp-land}
  Let $P_1, P_2\in \GPart$.  The \emph{intersection} of $P_1$ and $P_2$
  is the partition $P= P_1\wedge P_2\in \GPart$ given as $P=\{
  \gamma_1\land \gamma_2\mid \gamma_1\in P_1, \gamma_2\in P_2, \sem{
    \gamma_1\land \gamma_2}\ne \emptyset\}$.
\end{definition}

\begin{lemma}
  \label{le:gp-land-unique}
  Let $P_1, P_2\in \GPart$ and $\gamma\in P_1\wedge P_2$.  There are
  unique elements $\gamma_1\in P_1$, $\gamma_2\in P_2$ such that
  $\gamma= \gamma_1\land \gamma_2$.
\end{lemma}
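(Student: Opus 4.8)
The plan is to treat existence and uniqueness separately, with existence being immediate from the definition and uniqueness reducing entirely to the disjointness clause in the definition of a guard partition. Throughout, I would pass from the syntactic guards to their semantic sets $\sem\cdot$, so that conjunction becomes intersection and the partition axioms become statements about subsets of $\px$.

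For existence, I would simply unfold Definition~\ref{de:gp-land}: by construction, every element of $P_1\wedge P_2$ is of the form $\gamma_1\land \gamma_2$ with $\gamma_1\in P_1$, $\gamma_2\in P_2$, and $\sem{\gamma_1\land \gamma_2}\ne \emptyset$. Hence a given $\gamma\in P_1\wedge P_2$ admits at least one such decomposition, and moreover $\sem\gamma\ne \emptyset$, which is the fact I will exploit below.

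For uniqueness, suppose $\gamma_1,\gamma_1'\in P_1$ and $\gamma_2,\gamma_2'\in P_2$ satisfy $\gamma=\gamma_1\land \gamma_2$ and $\gamma=\gamma_1'\land \gamma_2'$. Passing to semantics gives $\sem{\gamma_1}\cap \sem{\gamma_2}=\sem\gamma=\sem{\gamma_1'}\cap \sem{\gamma_2'}$, and since $\sem\gamma\ne \emptyset$ I can pick a product $p\in \sem\gamma$. Then $p\models \gamma_1$ and $p\models \gamma_1'$. But $P_1$ is a guard partition, so by the disjointness clause (equivalently, by the remark that for any guard partition there is precisely one of its guards satisfied by a given product), $p$ satisfies a unique element of $P_1$; therefore $\gamma_1=\gamma_1'$. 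The symmetric argument applied to $P_2$ yields $\gamma_2=\gamma_2'$, completing uniqueness.

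There is no genuine computational obstacle here; the only point that needs care is the interpretation of ``unique,'' namely whether it is meant syntactically or up to semantic equivalence. I would address this by noting that within a single guard partition distinct guards have disjoint, hence distinct, semantic sets, so the two readings coincide for elements of $P_1$ and of $P_2$. The entire argument then rests on this disjointness property, and in particular on the existence of a witnessing product $p\in\sem\gamma$, which is exactly what membership $\gamma\in P_1\wedge P_2$ guarantees.
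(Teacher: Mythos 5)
Your proof is correct and follows essentially the same route as the paper's: both arguments hinge on $\sem\gamma\ne\emptyset$ (guaranteed by membership in $P_1\wedge P_2$) together with the disjointness of distinct guards within a single guard partition, the paper phrasing this as a contradiction via $\sem{\gamma_1\land\gamma_1'}=\emptyset$ while you equivalently pick a witnessing product $p\in\sem\gamma$ satisfying both guards. Your additional remark that syntactic and semantic uniqueness coincide inside a guard partition is a harmless clarification, not a divergence.
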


We can hence write the elements of $P_1\wedge P_2$ as
$\gamma_1\land \gamma_2$ without ambiguity.  We are ready to define
operations $\oplus$, $\otimes$ and $^*$ on functions in $\GP K$.

\begin{definition}
  \label{de:op_gpk}
  Let $f_1: P_1\to K, f_2: P_2\to K\in \GP K$.  Define functions
  $s', p': P_1\wedge P_2\to K$ and $t': P_1\to K$ by
  $s'( \gamma_1\land \gamma_2)= f_1( \gamma_1)\oplus f_2( \gamma_2)$,
  $p'( \gamma_1\land \gamma_2)= f_1( \gamma_1)\otimes f_2( \gamma_2)$,
  and $t'( \gamma_1)= f_1( \gamma_1)^*$.  Let $s, p, t\in \GP K$ be
  the canonicalizations of $s'$, $p'$ and $t'$, respectively, then we
  define $f_1\oplus f_2= s$, $f_1\otimes f_2= p$, and $f_1^*= t$.
\end{definition}

\begin{figure}[tb]
  \begin{algorithmic}[1]
    \Function{$K$Sum}{$f_1: P_1\to K, f_2: P_2\to K$}: $\GP K$
    \State \textbf{var} $f'$, $P'$
    \State $P'\gets \emptyset$
    \ForAll {$\gamma_1\in P_1$}
    \ForAll {$\gamma_2\in P_2$}
    \If {$\sem{ \gamma_1\land \gamma_2}\ne\emptyset$}
    \State $P'\gets P'\cup\{ \gamma_1\land \gamma_2\}$
    \State $f'( \gamma_1\land \gamma_2)\gets f_1( \gamma_1)\oplus f_2(
    \gamma_2)$
    \EndIf
    \EndFor
    \EndFor
    \State \Return $K$\textsc{Combine}($f'$)
    \EndFunction
    \Statex
    \Function{$K$Prod}{$f_1: P_1\to K, f_2: P_2\to K$}: $\GP K$
    \State \textbf{var} $f'$, $P'$
    \State $P'\gets \emptyset$
    \ForAll {$\gamma_1\in P_1$}
    \ForAll {$\gamma_2\in P_2$}
    \If {$\sem{ \gamma_1\land \gamma_2}\ne\emptyset$}
    \State $P'\gets P'\cup\{ \gamma_1\land \gamma_2\}$
    \State $f'( \gamma_1\land \gamma_2)\gets f_1( \gamma_1)\otimes
    f_2( \gamma_2)$
    \EndIf
    \EndFor
    \EndFor
    \State \Return $K$\textsc{Combine}($f'$)
    \EndFunction
    \Statex
    \Function{$K$Star}{$f: P\to K$}: $\GP K$
    \State \textbf{var} $f'$
    \ForAll {$\gamma\in P$}
    \State $f'( \gamma)\gets f( \gamma)^*$
    \EndFor
    \State \Return $K$\textsc{Combine}($f'$)
    \EndFunction
  \end{algorithmic}
  \caption{
    \label{fi:alg-ops}
    Functions which compute $\oplus$, $\otimes$ and $^*$ in $\GP K$.}
\end{figure}

Figure~\ref{fi:alg-ops} shows algorithms to compute these operations
in $\GP K$.  Note how these are similar to the \textsc{Split}
procedure in Fig.~\ref{fi:alg-minreach}.

Let $\nul, \one:\{ \ltrue\}\to K$ be the functions given
by $\nul( \ltrue)= 0$ and $\one( \ltrue)= 1$.  Then
$\nul, \one\in \GP K$.

\begin{lemma}
  \label{le:semop}
  Let $f_1, f_2\in \GP K$ and $p\in \px$. Then
  $\sem{ f_1\oplus f_2}( p)= \sem{ f_1}( p)\oplus \sem{ f_2}( p)$,
  $\sem{ f_1\otimes f_2}( p)= \sem{ f_1}( p)\otimes \sem{ f_2}( p)$,
  and $\sem{ f_1^*}( p)= \sem f_1( p)^*$.
\end{lemma}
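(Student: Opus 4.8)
The plan is to observe that each of the three operations of Definition~\ref{de:op_gpk} is computed by first forming a function on a refined partition and then canonicalizing, and that the semantic representation $\sem{\cdot}$ is unaffected by canonicalization. This reduces all three identities to checking the pointwise equations on the pre-canonicalized functions $s'$, $p'$, $t'$, which hold by construction. So the argument splits into one preliminary claim plus three short, essentially identical verifications.

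First I would establish the preliminary claim: for any $P\in\GPart$ and $f:P\to K$ with canonicalization $\tilde f:\tilde P\to K$, one has $\sem{\tilde f}=\sem f$ as functions $\px\to K$. Fix $p\in\px$ and let $\gamma\in P$ be the unique guard with $p\models\gamma$, so $\sem f(p)=f(\gamma)$. By Lemma~\ref{le:canon} there is a unique $\tilde\gamma\in\tilde P$ with $\sem\gamma\subseteq\sem{\tilde\gamma}$; since $p\in\sem\gamma$ we get $p\models\tilde\gamma$, and as $\tilde P\in\GPart$ this $\tilde\gamma$ is the unique guard of $\tilde P$ satisfied by $p$, whence $\sem{\tilde f}(p)=\tilde f(\tilde\gamma)$. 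By construction $\tilde\gamma=\bigvee\Gamma$ for the equivalence class $\Gamma$ with $\gamma\in\Gamma$, on which $f$ is constant, so $\tilde f(\tilde\gamma)=f(\gamma)=\sem f(p)$, proving the claim.

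With this in hand I would treat the three cases. For the sum, $f_1\oplus f_2$ is by definition the canonicalization of $s':P_1\wedge P_2\to K$, so the preliminary claim reduces the goal to $\sem{s'}(p)=\sem{f_1}(p)\oplus\sem{f_2}(p)$. Let $\gamma_1\in P_1$ and $\gamma_2\in P_2$ be the unique guards satisfied by $p$, so that $\sem{f_1}(p)=f_1(\gamma_1)$ and $\sem{f_2}(p)=f_2(\gamma_2)$. Then $p\models\gamma_1\land\gamma_2$, hence $\sem{\gamma_1\land\gamma_2}\ne\emptyset$ and $\gamma_1\land\gamma_2\in P_1\wedge P_2$; since $P_1\wedge P_2\in\GPart$ this is the unique guard of the refined partition satisfied by $p$, and by Lemma~\ref{le:gp-land-unique} its decomposition is unambiguous. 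Therefore $\sem{s'}(p)=s'(\gamma_1\land\gamma_2)=f_1(\gamma_1)\oplus f_2(\gamma_2)=\sem{f_1}(p)\oplus\sem{f_2}(p)$. The product identity is verbatim the same with $\otimes$ and $p'$ in place of $\oplus$ and $s'$ (the order $f_1(\gamma_1)\otimes f_2(\gamma_2)$ matching $\sem{f_1}(p)\otimes\sem{f_2}(p)$, so no commutativity is needed). For the star, $f_1^*$ is the canonicalization of $t':P_1\to K$, and for the unique $\gamma_1\in P_1$ with $p\models\gamma_1$ we get $\sem{t'}(p)=t'(\gamma_1)=f_1(\gamma_1)^*=\sem{f_1}(p)^*$, and the preliminary claim finishes the case. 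The only genuinely delicate step—the main obstacle—is the preliminary claim, where one must correctly align the unique guard of $P$ satisfied by $p$ with the unique guard of the coarser partition $\tilde P$ satisfied by $p$; once the containment and uniqueness from Lemma~\ref{le:canon} are invoked, everything else is routine bookkeeping with the uniqueness properties of guard partitions and of the decomposition in $P_1\wedge P_2$.
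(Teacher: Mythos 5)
Your proposal is correct and follows essentially the same route as the paper's proof: identify the unique guards $\gamma_1\in P_1$, $\gamma_2\in P_2$ satisfied by $p$, note that $\gamma_1\land\gamma_2\in P_1\wedge P_2$, evaluate the pre-canonicalized function there, and use Lemma~\ref{le:canon} to track the value through canonicalization. The only difference is presentational — you factor out the fact that canonicalization preserves the semantic representation as a standalone preliminary claim, which the paper instead inlines into the $\oplus$ case before declaring the other two cases similar.
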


\begin{lemma}
  \label{le:semeq}
  Let $f_1, f_2\in \GP K$.  Then $f_1= f_2$ iff $\sem{ f_1}= \sem{
    f_2}$.
\end{lemma}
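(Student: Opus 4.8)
The plan is to prove the nontrivial direction—that $\sem{f_1}=\sem{f_2}$ forces $f_1=f_2$—by showing that the semantic representation $\sem f$ of an injective $f\in\GP K$ determines both its partition and its values; the forward implication is immediate, since equal functions have equal semantic representations. Throughout I read equality in $\GP K$ at the level of feature guards up to logical equivalence, i.e.\ I identify $\gamma_1$ and $\gamma_2$ whenever $\sem{\gamma_1}=\sem{\gamma_2}$, as is forced by the guard-partition conditions being stated purely in terms of $\sem{\cdot}$; I would make this convention explicit up front, because $\sem{f_1}=\sem{f_2}$ can only ever recover the sets $\sem\gamma$, never the syntactic shape of the guards $\gamma$.

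First I would establish the structural fact that carries the argument, and this is exactly where injectivity enters. Write $f\colon P\to K$ and fix $x$ in the image of $f$. Since $f$ is injective there is a unique $\gamma\in P$ with $f(\gamma)=x$. For any $p\in\px$ the definition of $\sem f$ gives $\sem f(p)=f(\gamma_p)$, where $\gamma_p$ is the unique guard in $P$ satisfied by $p$; hence $\sem f(p)=x$ iff $f(\gamma_p)=x$ iff $\gamma_p=\gamma$ (by injectivity) iff $p\models\gamma$. Therefore the fiber $\sem f^{-1}(x)$ equals $\sem\gamma$. In other words, the fibers of $\sem f$ are exactly the sets $\{\sem\gamma\mid\gamma\in P\}$—which are nonempty, pairwise disjoint, and cover $\px$ precisely because $P$ is a guard partition—and on each such fiber the value of $\sem f$ is $f(\gamma)$.

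Now assume $\sem{f_1}=\sem{f_2}=:g$, with $f_i\colon P_i\to K$. Applying the structural fact to $f_1$ and to $f_2$, the partition of $\px$ into fibers of $g$ coincides both with $\{\sem\gamma\mid\gamma\in P_1\}$ and with $\{\sem\gamma\mid\gamma\in P_2\}$; thus $P_1$ and $P_2$ induce the same partition of $\px$. Matching cells, for each $\gamma_1\in P_1$ there is a unique $\gamma_2\in P_2$ with $\sem{\gamma_1}=\sem{\gamma_2}$, and evaluating $g$ at any $p$ in this common cell yields $f_1(\gamma_1)=g(p)=f_2(\gamma_2)$. Hence $f_1$ and $f_2$ agree cell by cell, both in domain and in value, so $f_1=f_2$ in $\GP K$.

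The step requiring the most care is the structural fact together with the identification of guards up to equivalence, and I would stress that injectivity is \emph{essential} there: were $f$ permitted to take a repeated value $x$ on two distinct guards $\gamma,\gamma'$, the fiber $\sem f^{-1}(x)$ would equal $\sem\gamma\cup\sem{\gamma'}$ rather than a single cell, and one could no longer reconstruct $P$ from $\sem f$, so the recovery argument would collapse. This is exactly why $\GP K$ is defined to consist of injective functions, and why the canonicalization of Definition~\ref{de:canon} (producing injective representatives by Lemma~\ref{le:canon}) is the right normal form to take before comparing elements.
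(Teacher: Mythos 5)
Your proof is correct, and it takes a noticeably different route from the paper's. The paper refines both functions to the finest guard partition $P=\{\gamma_p\mid p\in\px\}$ of characteristic guards with $\sem{\gamma_p}=\{p\}$, observes that the refinements $f_i'(\gamma_p)=\sem{f_i}(p)$ coincide once $\sem{f_1}=\sem{f_2}$, and concludes because each $f_i$ is the canonicalization of $f_i'$ and canonicalization is deterministic. You instead reconstruct $f$ directly from $\sem f$: injectivity makes the fibers of $\sem f$ exactly the cells $\sem\gamma$ for $\gamma\in P$, so both the induced partition of $\px$ and the values are recoverable, and matching cells finishes the argument. The two proofs hinge on the same point---injectivity is what makes elements of $\GP K$ canonical representatives of functions $\px\to K$---but yours is more self-contained (it invokes Definition~\ref{de:canon} only as commentary) and it makes explicit a convention the paper leaves tacit: equality in $\GP K$ can only be meant up to logical equivalence of feature guards, since $\sem f$ determines the sets $\sem\gamma$ but never the syntactic shape of the guards. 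The paper's proof quietly needs the same convention for the claim that $f_i$ \emph{is} the canonicalization of $f_i'$ (the canonicalized guards are disjunctions of characteristic guards, hence only logically equivalent to those in $P_i$). Your explicit handling of that point, and your remark on exactly where injectivity would fail, are genuine improvements in rigor over the published argument.
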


\begin{proposition}
  \label{th:gpk-semiring}
  The structure $( \GP K, \oplus, \otimes, \nul, \one)$ forms a\linebreak
  $^*$-continuous Kleene algebra.
\end{proposition}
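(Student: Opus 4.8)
The plan is to use the semantic map $\sem\cdot\colon \GP K\to K^{\px}$, sending each $f$ to its semantic representation $\sem f$, as a bridge between $\GP K$ and the function space $K^{\px}$ equipped with the pointwise operations inherited from $K$. Lemma~\ref{le:semop} says precisely that $\sem\cdot$ is a homomorphism for $\oplus$, $\otimes$ and $^*$, and since $\sem\nul$ and $\sem\one$ are the constant functions $p\mapsto 0$ and $p\mapsto 1$, it sends $\nul,\one$ to the zero and unit of $K^{\px}$. Lemma~\ref{le:semeq} says that $\sem\cdot$ is injective. Hence $\GP K$, with the operations of Def.~\ref{de:op_gpk}, is isomorphic to its image, a substructure of $K^{\px}$ closed under all the operations (e.g.\ $\sem{f_1}\oplus\sem{f_2}=\sem{f_1\oplus f_2}$ lies again in the image). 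The idea throughout is to verify each required property by applying $\sem\cdot$, checking the corresponding pointwise identity in $K$, and then pulling back along injectivity.

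First I would dispose of the semiring and idempotency axioms. The function space $K^{\px}$ with pointwise $\oplus$ and $\otimes$ is a semiring because $K$ is one, and it is idempotent because $K$ is. For any defining identity — associativity and commutativity of $\oplus$, associativity of $\otimes$, the two distributive laws, the unit and zero laws, and $f\oplus f=f$ — I apply $\sem\cdot$ to both sides; by Lemma~\ref{le:semop} both sides become the matching pointwise identity over $K$, which holds, and Lemma~\ref{le:semeq} then yields the identity in $\GP K$. This part is routine.

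The substantive point is the $^*$-continuous structure: the existence of the power sums and the loop-abstraction law~\eqref{eq:loopabs}. Here I first record that the natural order $f\le g\iff f\oplus g=g$ on the idempotent semiring $\GP K$ is preserved and reflected by $\sem\cdot$: using Lemmas~\ref{le:semop} and~\ref{le:semeq}, $f\le g$ holds iff $\sem f(p)\le \sem g(p)$ for every $p\in\px$, so $\sem\cdot$ is an order-embedding into $K^{\px}$ with its pointwise order. Writing $f^n$ for the $\otimes$-powers (so $\sem{f^n}(p)=\sem f(p)^n$ by induction via Lemma~\ref{le:semop}), I claim $f^*=\bigoplus_{n\ge 0}f^n$, i.e.\ $f^*$ is the least upper bound of the partial sums $s_m=\bigoplus_{n=0}^m f^n$. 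It is an upper bound because $\sem{s_m}(p)=\bigoplus_{n=0}^m\sem f(p)^n\le \sem f(p)^*=\sem{f^*}(p)$ in $K$, and it is least because any upper bound $g$ satisfies $\sem g(p)\ge \sem{s_m}(p)$ for all $m$, hence $\sem g(p)\ge \bigoplus_{n\ge 0}\sem f(p)^n=\sem{f^*}(p)$ by taking the supremum in $K$, so $f^*\le g$. Thus the required power sums exist in $\GP K$ and agree with the $^*$-operation of Def.~\ref{de:op_gpk}.

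Loop abstraction follows by the same device: for $f_1,f_2,f_3\in\GP K$ the partial sums $\bigoplus_{n=0}^m f_1\otimes f_2^n\otimes f_3$ have, pointwise, supremum $\sem{f_1}(p)\bigl(\bigoplus_{n\ge 0}\sem{f_2}(p)^n\bigr)\sem{f_3}(p)=\sem{f_1}(p)\,\sem{f_2}(p)^*\,\sem{f_3}(p)$ by loop abstraction~\eqref{eq:loopabs} in $K$, and the right-hand side equals $\sem{f_1\otimes f_2^*\otimes f_3}(p)$; the order-embedding argument above then gives $\bigoplus_{n\ge 0}f_1\otimes f_2^n\otimes f_3=f_1\otimes f_2^*\otimes f_3$, which is exactly~\eqref{eq:loopabs} in $\GP K$. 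The one place demanding care — and the main obstacle — is this last family of arguments: one must not merely transport equations but verify that the relevant suprema are actually attained within the image of $\sem\cdot$, which is what the order-embedding property guarantees. Note that no finiteness of $\px$ is needed here, since suprema in $K^{\px}$ are computed componentwise.
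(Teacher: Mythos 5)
Your proof is correct and follows essentially the same route as the paper's: show that $K^{\px}$ with pointwise operations is a $^*$-continuous Kleene algebra and transfer the structure to $\GP K$ via the semantic map, using Lemma~\ref{le:semop} as the homomorphism property and Lemma~\ref{le:semeq} for injectivity. You are in fact more explicit than the paper about why the infinite sums $\bigoplus_{n\ge 0}f^n$ exist in $\GP K$ (the order-embedding argument identifying $f^*$ as the least upper bound of the partial sums), a step the paper leaves implicit in its closing remark that the result is ``clear'' from the two lemmas.
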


\begin{lemma}
  \label{le:sem_matrix}
  For $n\ge 1$, $M\in \GP K^{ n\times n}$, and $p\in \px$, $\sem{ M^*}(
  p)= \sem M( p)^*$.
\end{lemma}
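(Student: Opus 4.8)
The plan is to use the fact, established by Lemma~\ref{le:semop}, that projection to a product $p$ commutes with the Kleene-algebra operations $\oplus$, $\otimes$ and $^*$ on individual elements of $\GP K$, and to lift this to matrices by induction on $n$ via the block-decomposition formula~\eqref{eq:mstar}. Throughout, for $M\in \GP K^{n\times n}$ I read $\sem M( p)$ as the entrywise projection, the matrix in $K^{n\times n}$ with $(\sem M( p))_{ i, j}= \sem{ M_{ i, j}}( p)$. First I would record that this entrywise extension is a unital semiring homomorphism $\GP K^{ n\times n}\to K^{ n\times n}$: we have $\sem \nul( p)= 0$ and $\sem \one( p)= 1$, and for matrix sum and product Lemma~\ref{le:semop} gives, entrywise, $\sem{ M_{ i, j}\oplus N_{ i, j}}( p)= \sem{ M_{ i, j}}( p)\oplus \sem{ N_{ i, j}}( p)$ and $\sem{ \bigoplus_k M_{ i, k}\otimes N_{ k, j}}( p)= \bigoplus_k \sem{ M_{ i, k}}( p)\otimes \sem{ N_{ k, j}}( p)$, so that $\sem{ M\oplus N}( p)= \sem M( p)\oplus \sem N( p)$ and $\sem{ M N}( p)= \sem M( p)\,\sem N( p)$. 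Hence projection commutes with all finite matrix sums and products.

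Then I would prove the claim by induction on $n$. For $n= 1$, $M=[ f]$ with $f\in \GP K$ and $M^*=[ f^*]$, so $\sem{ M^*}( p)= \sem{ f^*}( p)= \sem f( p)^*= \sem M( p)^*$ by the star part of Lemma~\ref{le:semop}. For $n\ge 2$, decompose $M= \left[\begin{smallmatrix} a & b \\ c & d \end{smallmatrix}\right]$ with $a$ of dimension $1$ and $d$ of dimension $n- 1$, both square of dimension $< n$. Since $\GP K$ is a $^*$-continuous Kleene algebra (Proposition~\ref{th:gpk-semiring}), so is $\GP K^{ n\times n}$, and~\eqref{eq:mstar} computes $M^*$; the same equation computes $\sem M( p)^*$ in $K^{ n\times n}$ with respect to the same block split. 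I would then apply $\sem{\cdot}( p)$ to each block on the right-hand side of~\eqref{eq:mstar}. Every such block is built from $a, b, c, d$ using matrix sums, matrix products, and the stars $a^*$, $d^*$, $( a\oplus b d^* c)^*$, $( d\oplus c a^* b)^*$, all of which are stars of matrices of dimension $< n$; by the induction hypothesis projection commutes with these smaller stars, and by the homomorphism property just established it commutes with the intervening sums and products. Therefore $\sem{ M^*}( p)$ equals the matrix obtained by substituting $\sem a( p), \sem b( p), \sem c( p), \sem d( p)$ into~\eqref{eq:mstar}, which is exactly $\sem M( p)^*$.

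The hard part is bookkeeping rather than conceptual. I must check that the induction is well-founded, i.e.\ that every star occurring on the right-hand side of~\eqref{eq:mstar} genuinely has dimension strictly less than $n$: this holds because $a$ and $d$ do, while $b d^* c$ and $c a^* b$ inherit the dimensions of $a$ and $d$ respectively, so $( a\oplus b d^* c)^*$ and $( d\oplus c a^* b)^*$ are stars of dimension $< n$ as well. I must also take care to invoke~\eqref{eq:mstar} in $\GP K^{ n\times n}$ and in $K^{ n\times n}$ for the \emph{same} block split, so that matching blocks may be compared directly. Once well-foundedness and the homomorphism property are in place, the comparison of blocks is a routine term-by-term application of Lemma~\ref{le:semop} and the induction hypothesis.
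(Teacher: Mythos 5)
Your proposal is correct and is essentially an expanded version of the paper's own one-line argument, which likewise observes that the block formula~\eqref{eq:mstar} for $M^*$ involves only additions, multiplications and stars and then appeals to Lemma~\ref{le:semop}. Your explicit induction on $n$ and the homomorphism bookkeeping merely make precise what the paper leaves implicit.
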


We are ready to give the central result of this paper, stating that
for a given featured weighted automaton $\mcal F$, computing
$| \mcal F|$ suffices to obtain all projected values.

\begin{theorem}
  \label{th:comput-1}
  Let $\mcal F$ be a featured weighted automaton over $K$ and
  $p\in \px$.  Then $| \proj p{ \mcal F}|= \sem{| \mcal F|}( p)$.
\end{theorem}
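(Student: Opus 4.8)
The plan is to exploit that the reachability value of a weighted automaton over a $^*$-continuous Kleene algebra is computed from its matrix representation as $| \mcal S|= \alpha M^* \kappa$, and that the semantic map $\sem{\cdot}( p)\colon \GP K\to K$ commutes with every operation involved. Since Proposition~\ref{th:gpk-semiring} makes $\GP K$ itself a $^*$-continuous Kleene algebra, I may regard $\mcal F$ as a $\GP K$-weighted automaton and compute $| \mcal F|= \alpha M^* \kappa$ with $M\in \GP K^{ n\times n}$ its transition matrix, the star and product being taken in $\GP K$. The projection $\proj p{ \mcal F}$ shares the states, initial set and accepting set of $\mcal F$, so its matrix representation uses the \emph{same} vectors $\alpha, \kappa$ and differs only in the transition matrix $M_p\in K^{ n\times n}$; thus $| \proj p{ \mcal F}|= \alpha M_p^* \kappa$, this time computed in $K$.

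First I would record that $\sem{\cdot}( p)$ preserves finite sums: Lemma~\ref{le:semop} gives $\sem{ f_1\oplus f_2}( p)= \sem{ f_1}( p)\oplus \sem{ f_2}( p)$, and an easy induction extends this to any finite $\oplus$; together with $\sem \nul( p)= 0$ and $\sem \one( p)= 1$ this makes $\sem{\cdot}( p)$ a semiring homomorphism. Applying it entrywise to the transition matrix, and using that each entry is the finite sum $M_{ i, j}= \bigoplus\{ f\mid( i, f, j)\in T\}$, I obtain $\sem M( p)= M_p$: the entrywise projection of the $\GP K$-transition matrix is exactly the transition matrix of $\proj p{ \mcal F}$.

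It then remains to push $\sem{\cdot}( p)$ through the expression $\alpha M^* \kappa$. Because $\alpha$ and $\kappa$ are $\{ 0, 1\}$-vectors, this product expands to the finite sum $\bigoplus\{( M^*)_{ i, j}\mid i\in I, j\in F\}$ in $\GP K$. Using the homomorphism property on this finite sum, then Lemma~\ref{le:sem_matrix} on the matrix star, and finally $\sem M( p)= M_p$, I compute $\sem{| \mcal F|}( p)= \bigoplus\{ \sem{( M^*)_{ i, j}}( p)\mid i\in I, j\in F\}= \bigoplus\{( \sem M( p)^*)_{ i, j}\mid i\in I, j\in F\}= \bigoplus\{( M_p^*)_{ i, j}\mid i\in I, j\in F\}= \alpha M_p^* \kappa= | \proj p{ \mcal F}|$, as required.

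The genuinely substantive content is carried by the two cited lemmas --- that $\sem{\cdot}( p)$ is a homomorphism (Lemma~\ref{le:semop}) and that it commutes with matrix star (Lemma~\ref{le:sem_matrix}) --- so the main obstacle has effectively been discharged before reaching this theorem. The remaining care is purely bookkeeping: checking that $\mcal F$ and $\proj p{ \mcal F}$ admit the same $\alpha, \kappa$, that the matrix-entry sums in $\GP K$ project correctly (finite-sum preservation, the one place the induction beyond Lemma~\ref{le:semop} is needed), and that $\nul, \one$ evaluate to $0, 1$ at $p$. None of these presents a real difficulty.
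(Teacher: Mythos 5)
Your proposal is correct and follows essentially the same route as the paper's proof: treat $\mcal F$ as a $\GP K$-weighted automaton, write $| \mcal F|= \alpha M^* \kappa$, push $\sem{\cdot}( p)$ through this expression via Lemmas~\ref{le:semop} and~\ref{le:sem_matrix}, and identify $( \sem \alpha( p), \sem M( p), k)$ as the matrix representation of $\proj p{ \mcal F}$. You merely spell out in more detail the bookkeeping (finite-sum preservation, $\sem M( p)= M_p$) that the paper leaves implicit.
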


\begin{proof}
  We have
  $\sem{| \mcal F|}( p)= \sem{ \alpha M^* \kappa}( p)= \sem \alpha( p)
  \sem M( p)^* \sem \kappa( p)$
  by Lemmas~\ref{le:semop} and~\ref{le:sem_matrix}.  Noting that the
  matrix representation of $\proj p{ \mcal F}$ is $( \sem \alpha( p),
  \sem M( p), k)$, the proof is finished.
\end{proof}


\section{Featured Energy Problems}
\label{se:energy}

In this final section we apply the theoretical results of this paper
to featured energy problems.

\subsection{Energy Problems}
\label{se:energy.prob}

The \emph{energy semiring}~\cite{DBLP:conf/atva/EsikFLQ13} is the
structure $\Nrg=( \E, \mathord\vee, \mathord\circ, \bot, \top)$.  Here
$\E$ is the set of \emph{energy functions}, which are partial
functions
$f: \Realnn\cup\{ \bot, \infty\}\to \Realnn\cup\{ \bot, \infty\}$ on
extended real numbers ($f( x)= \bot$ meaning that $f$ is undefined at
$x$) with the property that
\begin{equation}
  \label{eq:deriv1}
  \text{for all } x\le y: f( y)- f( x)\ge y- x\,.
\end{equation}
These have been introduced in~\cite{DBLP:conf/atva/EsikFLQ13} as a
general framework to handle formal energy problems as below.  The
operations in the semiring are (pointwise) maximum as $\oplus$ and
function composition as $\otimes$, and the neutral elements are the
functions $\bot$, $\id$ given by $\bot( x)= \bot$ and $\id( x)= x$ for
all $x\in \Realnn\cup\{ \bot, \infty\}$.

\begin{definition}
  An \emph{energy automaton} is a tuple $( S, I, F, T)$ consisting of
  a finite set $S$ of states, subsets $I, F\subseteq S$ of initial and
  accepting states, and a finite set $T\subseteq S\times \E\times S$
  of transitions.
\end{definition}

Hence the transition labels in energy automata are functions which
proscribe how a real-valued variable evolves along a transition.  An
\emph{energy problem} asks, then, whether some state is
reachable when given a certain \emph{initial energy}, or whether the
automaton admits infinite accepting runs from some initial energy:

A \emph{global state} of an energy automaton is a pair $q=( s, x)$
with $s\in S$ and $x\in \Realnn$.  A transition between global states
is of the form $(( s, x), f,( s', x'))$ such that $( s, f, s')\in T$
and $x'= f( x)$.  A (finite or infinite) \emph{run} of the automaton
is a (finite or infinite) path in the graph of global states and
transitions.

As the input to a decision problem must be in some way finitely
representable, we will state them for subclasses $\E'\subseteq \E$ of
\emph{computable} energy functions (but note that we give no technical
meaning to the term ``computable'' other that ``finitely
representable''); an $\E'$-automaton is an energy automaton
$( S, I, F, T)$ with $T\subseteq S\times \E'\times S$.

\begin{problem}[Reachability]
  \label{pb:reach}
  Given a subset $\E'\subseteq \E$ of computable functions, an
  $\E'$-automaton $\mcal S=( S, I, F, T)$ and a computable initial
  energy $x_0\in \Realnn$: do there exist $s_0\in I$ and a finite run
  of $\mcal S$ from $(s_0, x_0)$ which ends in a state in $F$?
\end{problem}

\begin{problem}[B{\"u}chi acceptance]
  \label{pb:buchi}
  Given a subset $\E'\subseteq \E$ of computable functions, an
  $\E'$-automaton $\mcal S=( S, I, F, T)$ and a computable initial
  energy $x_0\in \Realnn$: do there exist $s_0\in I$ and an infinite
  run of $\mcal S$ from $(s_0, x_0)$ which visits $F$ infinitely
  often?
\end{problem}

As customary, a run such as in the statements above is said to be
accepting.

\subsection{$^*$-Continuous Kleene $\omega$-Algebras}

We need a few algebraic notions connected to infinite runs in weighted
automata before we can continue.  An \emph{idempotent
  semiring-semimodule pair}~\cite{journals/sgf/EsikK07, book/BloomE93}
$( K, V)$ consists of an idempotent semiring
$K=( K, \oplus, \otimes, 0, 1)$ and a commutative idempotent monoid
$V=( V, \oplus, 0)$ which is equipped with a left $K$-action
$K \times V \to V$, $( x, v) \mapsto x v$, satisfying the following
axioms for all $x, y\in K$ and $u, v\in V$:
\begin{alignat*}{2}
  ( x\oplus y) v &= x v\oplus y v \qquad\qquad&
  x( u\oplus v) &= x u\oplus x v \\
  ( x y) v &= x( y v) &
  0\otimes x &= 0 \\
  x\otimes 0 &= 0 &
  1\otimes v &= v
\end{alignat*}
Also non-idempotent versions of these are in use, but we will only
need the idempotent one here.

A \emph{generalized $^*$-continuous Kleene
  algebra}~\cite{DBLP:conf/dlt/EsikFL15} is an idempotent
semiring-semimodule pair $( K, V)$ where $K$ is a $^*$-continuous
Kleene algebra such that for all $x, y\in K$ and for all $v\in V$,
\begin{equation*}
  x y^* v= \bigoplus_{ n\ge 0} x y^n v\,.
\end{equation*}

A \emph{$^*$-continuous Kleene
  $\omega$-algebra}~\cite{DBLP:conf/dlt/EsikFL15} consists of a generalized
$^*$-continuous Kleene algebra $( K, V)$ together with an
\emph{infinite product} operation $K^\omega\to V$ which maps every
infinite sequence $x_0, x_1,\dotsc$ in $K$ to an element
$\prod_n x_n$ of $V$.  The infinite product is subject to the
following conditions:
\begin{itemize}
\item For all $x_0, x_1,\dotsc\in K$,
  $\prod_n x_n= x_0 \prod_n x_{ n+ 1}$.
\item Let $x_0, x_1,\dotsc\in K$ and $0= n_0\le n_1\le\dotsm$ a
  sequence which increases without a bound. Let
  $y_k= x_{ n_k}\dotsm x_{ n_{ k+ 1}- 1}$ for all $k\ge 0$.  Then
  $\prod_n x_n= \prod_k y_k$.
\item For all $x_0, x_1,\dotsc, y, z\in K$, we have
  $\prod_n( x_n( y\oplus z))= \bigoplus_{ x_0', x_1',\dotsc\in\{ y,
    z\}\;} \prod_n x_n x_n'$.
\item For all $x, y_0, y_1,\dotsc\in K$, 
  $\prod_n x^* y_n= \bigoplus_{ k_0, k_1,\dotsc\ge 0\;} \prod_n x^{ k_n}
  y_n$.
\end{itemize}

For any idempotent semiring-semimodule pair $( K, V)$ and $n\ge 1$, we
can form the matrix semiring-semimodule pair $( K^{ n\times n}, V^n)$
whose elements are $n\times n$-matrices of elements of $K$ and
$n$-dimensional (column) vectors of elements of $V$, with the action
of $K^{ n\times n}$ on $V^n$ given by the usual matrix-vector product.

When $( K, V)$ is a $^*$-continuous Kleene $\omega$-algebra, then
$( K^{ n\times n}, V^n)$ is a generalized $^*$-continuous Kleene
algebra~\cite{DBLP:conf/dlt/EsikFL15}.
By~\cite[Lemma~17]{DBLP:conf/dlt/EsikFL15}, there is an $\omega$-operation
on $K^{ n\times n}$ defined by
\begin{equation*}
  M^\omega_i= \bigoplus_{1\le k_1,k_2,\dotsc\le n} M_{ i, k_1} M_{ k_1, k_2}\dotsm
\end{equation*}
for all $M\in K^{ n\times n}$ and $1\le i\le n$.  Also, if $n\ge 2$
and
$M = \left[ \begin{smallmatrix} a & b \\ c &
    d \end{smallmatrix}\right]$,
where $a$ and $d$ are square matrices of dimension less than $n$, then
\begin{equation*}
  M^\omega = 
  \begin{bmatrix}
    ( a\oplus b d^* c)^\omega\oplus( a\oplus b d^* c)^* b d^\omega \\
    ( d\oplus c a^* b)^\omega\oplus( d\oplus c a^* b)^* c a^\omega
  \end{bmatrix} .
\end{equation*} 

We also need another matrix-$\omega$-power below.  Let $n\ge 2$,
$k< n$ and $M\in K^{ n\times n}$, and write
$M = \left[ \begin{smallmatrix} a & b \\ c &
    d \end{smallmatrix}\right]$ as above, with $a\in K^{ k\times k}$
top left $k$-by-$k$ part of $M$.  We define
\begin{equation*}
  M^{ \omega_k}=
  \begin{bmatrix}
    ( a\oplus b d^* c)^\omega \\
    d^* c( a\oplus b d^* c)^\omega
  \end{bmatrix}.
\end{equation*}

Let $( K, V)$ be a $^*$-continuous Kleene $\omega$-algebra and
$\mcal S=( S, I, F, T)$ a $K$-weighted automaton.  An \emph{infinite
  path} in $\mcal S$ is an infinite alternating sequence $\pi=( s_0,
x_0, s_1, x_1, s_2,\dotsc)$ of transitions $( s_0, x_0, s_1),$ $( s_1, x_1,
s_2),\dotsc \in T$.  The \emph{weight} of $\pi$ is the infinite
product $w( \pi)= \prod_n x_n\in V$.

An infinite path $\pi=( s_0, x_0, s_1, x_1,\dotsc)$ in $\mcal S$ is
said to be \emph{B{\"u}chi accepting} if $s_0\in I$ and the set
$\{ n\in \Nat\mid s_n\in F\}$ is infinite.  The \emph{B{\"u}chi value}
$\| \mcal S\|$ of $\mcal S$ is defined to be the sum of the weights of
all its B{\"u}chi accepting infinite paths:
\begin{equation*}
  \| \mcal S\|= \bigoplus\{ w( \pi)\mid \pi\text{ B{\"u}chi accepting
    infinite path in $\mcal S$}\}
\end{equation*}
Let $( \alpha, M, k)$ be the matrix representation of $\mcal S$.  It
can be shown~\cite{DBLP:conf/dlt/EsikFL15} that
\begin{equation*}
  \| \mcal S\|= \alpha M^{ \omega_k}\,.
\end{equation*}

\subsection{Featured Energy Problems}

Recall that $\E$ denotes the set of energy functions: functions
$f: \Realnn\cup\{ \bot, \infty\}\to \Realnn\cup\{ \bot, \infty\}$ with
the property~\eqref{eq:deriv1} that whenever $x\le y$, then
$f( y)- f( x)\ge y- x$; and that
$\Nrg=( \E, \mathord\vee, \mathord\circ, \bot, \top)$ is the semiring
of energy functions.

\begin{lemma}[\cite{DBLP:journals/corr/EsikFL15a}]
  $\Nrg$ is a $^*$-continuous Kleene algebra.
\end{lemma}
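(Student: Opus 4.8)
The plan is to verify directly that $\Nrg=(\E,\vee,\circ,\bot,\id)$ satisfies the three defining requirements of a $^*$-continuous Kleene algebra: that it is an idempotent semiring, that the infinite sums $\bigvee_{n\ge 0} f^n$ exist in $\E$ for every energy function $f$, and that the loop-abstraction identity~\eqref{eq:loopabs} holds. First I would confirm the semiring structure, which is mostly routine: composition is associative with unit $\id$, pointwise maximum is a commutative idempotent monoid with unit $\bot$, and the key nontrivial check is that composition distributes over pointwise maximum and that the class $\E$ is closed under both operations. Closure under $\vee$ is immediate from~\eqref{eq:deriv1}, since a pointwise max of two functions satisfying the slope-one condition again satisfies it; closure under $\circ$ follows by composing the slope inequalities. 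Distributivity on the right, $(g\vee h)\circ f=(g\circ f)\vee(h\circ f)$, is pointwise-trivial, while left distributivity, $f\circ(g\vee h)=(f\circ g)\vee(f\circ h)$, requires that $f$ be \emph{monotone}, which is exactly what~\eqref{eq:deriv1} guarantees (from $x\le y$ we get $f(y)-f(x)\ge y-x\ge 0$). Idempotency of $\vee$ is clear, and note this matches the general remark in the excerpt that idempotency is what we need here.

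Next I would establish existence of the star, i.e.\ that $\bigvee_{n\ge 0} f^n$ is a well-defined element of $\E$ for each $f\in\E$. The idea is that for a fixed argument $x$, the sequence $f^n(x)$ is \emph{monotone nondecreasing} in $n$: because every energy function satisfies $f(y)\ge f(x)+(y-x)$ with $y\ge x$, one shows $f(x)\ge x$ fails in general, so I must argue more carefully by cases on whether the orbit stays defined and whether it escapes to $\infty$. The supremum over $n$ exists in $\Realnn\cup\{\bot,\infty\}$ pointwise, and the main work is to check that the resulting function $f^*(x)=\sup_n f^n(x)$ still obeys~\eqref{eq:deriv1}; this follows because a pointwise supremum of a family of functions each satisfying the slope-one inequality again satisfies it, using that the inequality $f^n(y)-f^n(x)\ge y-x$ passes to the limit.

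Finally I would verify~\eqref{eq:loopabs}, namely $f\circ\bigl(\bigvee_{n\ge 0} g^n\bigr)\circ h=\bigvee_{n\ge 0} f\circ g^n\circ h$ for all $f,g,h\in\E$. The inclusion $\ge$ is trivial from monotonicity. For $\le$, evaluate at a point $x$: writing $y=h(x)$, the right-hand side at $x$ is $\sup_n f(g^n(y))$, and the left-hand side is $f(\sup_n g^n(y))$; so the identity reduces to the claim that each $f\in\E$ commutes with suprema of nondecreasing sequences, i.e.\ $f$ is \emph{continuous from below}. This is where the real content lies, and it is the step I expect to be the main obstacle: monotonicity alone gives only $f(\sup_n a_n)\ge\sup_n f(a_n)$, so I must use the slope condition~\eqref{eq:deriv1} more strongly to rule out an upward jump at the limit. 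The plan is to combine the lower bound $f(b)-f(a_n)\ge b-a_n$ (with $b=\sup_n a_n$) with a matching upper bound on $f(b)$ obtained from the same inequality read in the other direction, or to appeal to the structural description of energy functions from~\cite{DBLP:conf/atva/EsikFLQ13} which already records that such functions are the continuous representatives; either way, continuity from below closes the argument and completes the verification that $\Nrg$ is a $^*$-continuous Kleene algebra. Since this lemma is explicitly attributed to~\cite{DBLP:journals/corr/EsikFL15a}, I would expect the cleanest writeup to cite that reference for the continuity property and thereby shorten the final step.
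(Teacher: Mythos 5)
The paper does not actually prove this lemma --- it is imported wholesale from \cite{DBLP:journals/corr/EsikFL15a}, so there is no in-paper argument to compare against. Judged on its own terms, your direct verification is fine for most of the axioms: closure of $\E$ under $\vee$ and $\circ$, the distributivity laws (one side needing monotonicity, which \eqref{eq:deriv1} gives), idempotency, and the existence of $f^*=\bigvee_{n\ge 0}f^n$ as a pointwise supremum that again satisfies \eqref{eq:deriv1} are all correctly handled. You also correctly isolate the crux of $^*$-continuity: the outer function must commute with the supremum of the orbit $g^n(y)$.

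The gap is in how you propose to close that last step. General ``continuity from below'' is \emph{false} for energy functions as defined: take $f(x)=x$ for $x<1$ and $f(x)=x+1$ for $x\ge 1$; this satisfies \eqref{eq:deriv1} (the jump only helps the inequality) but $\sup_{x<1}f(x)=1\ne 2=f(1)$. Correspondingly, your suggested mechanism --- reading \eqref{eq:deriv1} ``in the other direction'' to get an upper bound on $f(b)$ with $b=\sup_n a_n$ --- does not exist: the inequality only yields $f(a_n)\le f(b)-(b-a_n)$, which re-proves the easy direction $\sup_n f(a_n)\le f(b)$ and says nothing about $f(b)\le\sup_n f(a_n)$. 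The appeal to \cite{DBLP:conf/atva/EsikFLQ13} for ``continuous representatives'' would likewise not rescue an arbitrary $f\in\E$. What actually saves the lemma is the special shape of orbits of energy functions, which your sketch never exploits: if $g(y)\le y$ then \eqref{eq:deriv1} forces $g^{n+1}(y)\le g^n(y)-(y-g(y))$, so the orbit is non-increasing (eventually $\bot$ unless constant) and $\sup_n g^n(y)$ is \emph{attained} at $n=0$, making the interchange trivial; if $g(y)>y$ then $g^{n+1}(y)\ge g^n(y)+(g(y)-y)$, so the orbit increases without bound and $\sup_n g^n(y)=\infty$, and for any $h\in\E$ defined somewhere on the orbit, \eqref{eq:deriv1} gives $h(g^n(y))\to\infty$ as well as $h(\infty)=\infty$. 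Thus continuity is only ever needed at $\infty$ (where the slope condition supplies it), not at finite limit points (where it fails). Your proof needs this case analysis in place of the continuity claim.
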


Let $\Bool=\{ \lfalse, \ltrue\}$ be the Boolean lattice.  We say that
a function $f: \Realnn\cup\{ \bot, \infty\}\to \Bool$ is
\emph{$\infty$-continuous} if $f= \bot$ or for all
$X\subseteq \Realnn\cup\{ \bot, \infty\}$ with $\bigvee X= \infty$,
$\bigvee f( X)= \ltrue$.

Let $\V$ be the set of $\infty$-continuous functions
$f: \Realnn\cup\{ \bot, \infty\}\to \Bool$.  With operation $\vee$
defined by $( f\vee g)( x)= f( x)\lor g( x)$ and unit $\bot$ given by
$\bot( x)= \lfalse$ for all $x\in \Realnn\cup\{ \bot, \infty\}$,
$\Vrg=( \V, \vee, \bot)$ forms a commutative idempotent monoid.  Then
$( \Nrg, \Vrg)$ is an idempotent semiring-semimodule pair.

Define an infinite product $\E\to \V$ as follows: Let
$f_0, f_1,\dotsc \in \E$ be an infinite sequence and
$x\in \Realnn\cup\{ \bot, \infty\}$.  Let $x_0= f_0( x)$ and, for each
$k\ge 1$, $x_k= f_k( x_{ k- 1})$.  Thus $x_0, x_1,\dotsc$ is the
infinite sequence of values obtained by application of finite prefixes
of the function sequence $f_0, f_1,\dotsc$.  Then
$( \prod_n f_n)( x)= \lfalse$ if there is an index $k$ for which
$x_k= \bot$ and $( \prod_n f_n)( x)= \ltrue$ otherwise.

It can be shown~\cite{DBLP:journals/corr/EsikFL15a} that $\prod_n f_n$
is $\infty$-continuous for any infinite sequence
$f_0, f_1,\dotsc \in \E$, hence this defines indeed a mapping
$\E^\omega\to \V$.

\begin{lemma}[\cite{DBLP:journals/corr/EsikFL15a}]
  $( \Nrg, \Vrg)$ is a $^*$-continuous Kleene $\omega$-algebra.
\end{lemma}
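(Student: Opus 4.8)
The plan is to check the two ingredients of a $^*$-continuous Kleene $\omega$-algebra in turn. The preceding lemma already gives that $\Nrg$ is a $^*$-continuous Kleene algebra, and the text has established that $(\Nrg, \Vrg)$ is an idempotent semiring-semimodule pair and that the infinite product is a well-defined map $\E^\omega \to \V$. What remains is (a) the generalized $^*$-continuity equation $f g^* v = \bigvee_{n} f g^n v$ for $f, g \in \E$, $v \in \V$, and (b) the four infinite-product axioms. For (a) I would unfold both sides pointwise: the left action of $f$ on $v$ is precomposition, $(f\,v)(z) = v(f(z))$, so with $u = f(x)$ the identity reduces to $v(g^*(u)) = \bigvee_{n} v(g^n(u))$, where $g^*(u) = \bigvee_n g^n(u) = \sup_n g^n(u)$. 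I would then split on this supremum: if it is $\infty$, then $\{g^n(u)\}$ is cofinal in $\infty$ and $\infty$-continuity of $v$ makes both sides $\ltrue$ (in particular $v(\infty) = \ltrue$); if it is finite, the defining inequality \eqref{eq:deriv1} makes each energy function monotone and the iterates $(g^n(u))_n$ convex, so the finite supremum is attained at some finite iterate and the two sides coincide.

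For the four infinite-product axioms I would use throughout that $\bot$ is absorbing ($f(\bot) = \bot$) and that $(\prod_n f_n)(x) = \ltrue$ exactly when the run $x_0 = f_0(x)$, $x_k = f_k(x_{k-1})$ never reaches $\bot$. The first axiom $\prod_n f_n = f_0 \prod_n f_{n+1}$ then follows by peeling off the first application $f_0(x)$ and comparing the two runs. The grouping axiom is where a finite block $y_k = f_{n_k} \otimes \dots \otimes f_{n_{k+1}-1}$ meets the per-step definedness test: I would observe that $y_k(z) \ne \bot$ holds precisely when every intermediate application inside the block is defined, again because $\bot$ is absorbing, so the block run stays defined iff the full run does and hence $\prod_n f_n = \prod_k y_k$ (empty blocks equal $\id$ and are harmless).

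The remaining two axioms carry the genuine content, and both rest on the monotonicity supplied by \eqref{eq:deriv1}, that more energy is never worse, so that the pointwise-largest running sequence dominates every other. For binary distributivity I would argue that iterating with $f_n(y \oplus z)$ takes, at each step, the larger of the two continuations; this greedy maximum dominates every fixed choice sequence in $\{y, z\}$ (giving one inequality with the right-hand $\bigoplus$) and is itself realised by the choice that always selects the larger continuation (giving equality). The star axiom $\prod_n x^* y_n = \bigoplus_{k_0, k_1, \dotsc \ge 0} \prod_n x^{k_n} y_n$ is the same argument with the countable choice set $\{x^0, x^1, x^2, \dotsc\}$ in place of $\{y, z\}$, using $x^* = \bigvee_m x^m$. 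I expect this star axiom to be the main obstacle: unlike the binary case it forces an interchange of a countable supremum, both the one defining each $x^*$ and the outer one over all loop-count sequences $k_0, k_1, \dotsc$, with the infinitary product. The crux is to show that the best finite number of $x$-loops at each step already attains the supremum, which I would secure by combining part (a), the convexity-attainment argument for finite suprema, and $\infty$-continuity of $v \in \V$ to push the suprema that escape to $\infty$ through $v$ without loss.
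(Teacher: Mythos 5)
First, a point of comparison: the paper does not prove this lemma at all --- it is imported verbatim from \cite{DBLP:journals/corr/EsikFL15a}, and the appendix contains no argument for it. So there is no in-paper proof to match your attempt against; I can only assess your plan against what such a proof must contain. Your decomposition into (a) the generalized $^*$-continuity law and (b) the four infinite-product axioms is the right one, and your handling of the first two product axioms ($\bot$ is absorbing, so definedness of the blocked run and of the original run coincide) and of the binary distributivity axiom (the greedy pointwise-maximum run dominates every choice sequence and is itself realised by one, using the monotonicity that \eqref{eq:deriv1} forces on energy functions) is sound.

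The genuine gap is in the last axiom, which you correctly single out as the crux, but your proposed resolution --- ``the best finite number of $x$-loops at each step already attains the supremum'' --- is false in exactly the only nontrivial case. By \eqref{eq:deriv1}, for $f\in\E$ and $e$ with $f(e)>e$ the iterates satisfy $f^m(e)\ge e+m(f(e)-e)$, so $f^*(e)=\infty$ is attained by no finite $m$; whereas if $f(e)\le e$ the iterates decrease at least linearly and $f^*(e)=e$ is already attained at $m=0$ (this dichotomy is precisely the one used in the paper's proof of Lemma~\ref{le:fomega-E}). In the divergent case you cannot reduce to one finite exponent per step: you must construct the entire sequence $k_0,k_1,\dotsc$ inductively, choosing each $k_n$ large enough that the energy surviving $x^{k_n}$ followed by $y_n$ exceeds a threshold sufficient for the rest of the run, and the existence of such thresholds rests on the behaviour of energy functions at $\infty$ (that $f(\infty)=\bigvee_e f(e)$), not on $\infty$-continuity of elements of $\V$ --- no $v\in\V$ occurs in that axiom, so the tool you propose to ``push suprema through'' is not available there. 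A smaller but real issue: in (a) you reduce to $v(g^*(u))=\bigvee_n v(g^n(u))$ and close the attained case by saying the two sides coincide; the inequality $\bigvee_n v(g^n(u))\le v(g^*(u))$ needs $v$ to be monotone, and $\infty$-continuity as defined in the paper does not give monotonicity, so you must either restrict to monotone $v$ or justify the axiom for arbitrary $v\in\V$ by another route.
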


Hence the energy problems stated at the end of
Sect.~\ref{se:energy.prob} can be solved by computing reachability
and B{\"u}chi values of energy automata:

\begin{proposition}[\cite{DBLP:journals/corr/EsikFL15a}]
  Let $\mcal S=( S, I, F, T)$ be an energy automaton and
  $x_0\in \Realnn$.
  \begin{itemize}
  \item There exist $s_0\in I$ and a finite run of $\mcal S$ from
    $(s_0, x_0)$ which ends in a state in $F$ iff
    $| \mcal S|( x_0)\ne \bot$.
  \item There exist $s_0\in I$ and an infinite run of $\mcal S$ from
    $(s_0, x_0)$ which visits $F$ infinitely often iff
    $\| \mcal S\|( x_0)= \ltrue$.
  \end{itemize}
\end{proposition}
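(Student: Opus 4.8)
The plan is to prove each of the two bullets by unwinding the definitions of $| \mcal S|$ and $\| \mcal S\|$ as sums over paths and showing that these sums, when evaluated at the initial energy $x_0$, are computed pointwise; the operational statements then reduce to elementary lattice facts. By definition $| \mcal S|= \bigoplus\{ w( \pi)\mid \pi\text{ accepting finite path}\}$ with $w( \pi)= f_0\otimes \dotsm \otimes f_k$ in $\Nrg$, and $\| \mcal S\|= \bigoplus\{ w( \pi)\mid \pi\text{ B\"uchi accepting}\}$ with $w( \pi)= \prod_n f_n$ in $\Vrg$. The first thing I would establish is that in $\Nrg$ the operation $\oplus= \vee$ is pointwise maximum (this holds by definition for finite sums) and, crucially, that the generally infinite sum defining $| \mcal S|$ is again a pointwise supremum, so that $| \mcal S|( x_0)= \bigvee\{ w( \pi)( x_0)\mid \pi\}$; the analogous statement holds in $\Vrg$, where $\oplus$ is pointwise disjunction, giving $\| \mcal S\|( x_0)= \bigvee\{ w( \pi)( x_0)\mid \pi\}$. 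This is where the $^*$-continuous Kleene (resp.\ Kleene $\omega$-) algebra structure of $\Nrg$ and $( \Nrg, \Vrg)$ enters, since it is what guarantees these sums exist and are pointwise suprema.

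For the first bullet I would then read off the operational meaning of $w( \pi)( x_0)$: for an accepting path $\pi=( s_0, f_0,\dotsc, f_k, s_{ k+ 1})$ with $s_0\in I$ and $s_{ k+ 1}\in F$, the value $w( \pi)( x_0)$ — with composition in $\Nrg$ taken in the order induced by the path — is exactly the energy reached by applying $f_0,\dotsc, f_k$ successively starting from $x_0$, i.e.\ the final energy of the candidate run along $\pi$ from $( s_0, x_0)$. By the transition rule $x'= f( x)$ and the convention $f( x)= \bot$ for ``undefined'', this value is $\ne \bot$ precisely when every intermediate application is defined, i.e.\ precisely when $\pi$ induces a genuine finite run of $\mcal S$ from $( s_0, x_0)$ ending in $F$. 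Since $\bot$ is the least element of $\Realnn\cup\{ \bot, \infty\}$, the supremum $\bigvee_\pi w( \pi)( x_0)$ is $\ne \bot$ iff at least one $w( \pi)( x_0)\ne \bot$; combined with the previous paragraph this gives $| \mcal S|( x_0)\ne \bot$ iff such a run exists, which is the first claim.

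For the second bullet I would argue identically, using the definition of the infinite product in $( \Nrg, \Vrg)$: for a B\"uchi accepting path $\pi=( s_0, f_0, s_1, f_1,\dotsc)$ with $s_0\in I$ and $s_n\in F$ for infinitely many $n$, the value $w( \pi)( x_0)=( \prod_n f_n)( x_0)$ is $\ltrue$ exactly when the induced sequence of energies never hits $\bot$, i.e.\ when $\pi$ induces a genuine infinite run from $( s_0, x_0)$. Since this run visits $F$ infinitely often iff $\pi$ is B\"uchi accepting, and since in $\Bool$ a join equals $\ltrue$ iff some disjunct is $\ltrue$, we obtain $\| \mcal S\|( x_0)= \bigvee_\pi w( \pi)( x_0)= \ltrue$ iff some B\"uchi accepting run from an initial global state exists, which is the second claim.

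I expect the main obstacle to be the first step: justifying rigorously that evaluation at $x_0$ commutes with the infinite sums defining $| \mcal S|$ and $\| \mcal S\|$, i.e.\ that these are genuine pointwise suprema rather than mere upper bounds. This is exactly the content carried by $\Nrg$ being a $^*$-continuous Kleene algebra and $( \Nrg, \Vrg)$ a $^*$-continuous Kleene $\omega$-algebra, and I would either invoke the construction of these structures in \cite{DBLP:journals/corr/EsikFL15a} or verify pointwise-ness directly for the operations $\bigoplus_n x^n$ and the infinite product. A secondary, bookkeeping-level subtlety is the value $\infty$: the algebraic values range over $\Realnn\cup\{ \bot, \infty\}$ whereas global states carry energies in $\Realnn$, so I would note that energy functions force $f( \infty)= \infty$ and that reaching $\infty$ is operationally unproblematic, hence it does not disturb the $\ne \bot$ / $= \ltrue$ dichotomies above.
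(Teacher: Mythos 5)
The paper does not actually prove this proposition: it is imported verbatim from \cite{DBLP:journals/corr/EsikFL15a}, and no argument for it appears in the appendix, so there is nothing in the paper to compare your proof against. That said, your argument is the natural one and is correct in outline: unwind $|\mcal S|$ and $\|\mcal S\|$ as path sums, observe that evaluation at $x_0$ turns them into suprema of $w(\pi)(x_0)$, read $w(\pi)(x_0)$ operationally as the terminal energy (resp.\ the ``never hits $\bot$'' predicate) of the candidate run along $\pi$, and finish with the fact that a join exceeds the bottom element iff some summand does. You also correctly identify the one step that genuinely needs justification, namely that the infinite sums are \emph{pointwise} suprema; this does go through because both $\E$ and $\V$ are closed under pointwise suprema (the defining inequality $f(y)-f(x)\ge y-x$ and $\infty$-continuity are each preserved under arbitrary pointwise joins), so the least upper bound in the natural order of $\Nrg$, resp.\ $\Vrg$, coincides with the pointwise one. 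Two small points to make explicit if you write this out in full: the product $\otimes=\mathord\circ$ in $\Nrg$ must be taken in diagrammatic order for $w(\pi)(x_0)$ to be the forward iteration $f_k(\dotsm f_0(x_0))$ rather than the reverse (you gesture at this but it is where a sign error would hide), and since energy functions are partial with $\bot$ absorbing, $w(\pi)(x_0)\ne\bot$ is indeed equivalent to \emph{all} intermediate energies being defined, which is what makes the path-level and run-level statements match.
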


We now define energy problems for featured automata.  Recall that $N$
denotes a set of features and $\px\subseteq 2^N$ a set of products
over $N$.

\begin{definition}
  A \emph{featured energy automaton} over $\px$ is a tuple
  $( S, I, F, T)$ consisting of a finite set $S$ of states, subsets
  $I, F\subseteq S$ of initial and accepting states, and a finite set
  $T\subseteq S\times \GP \E\times S$ of transitions.
\end{definition}

Hence transitions in featured energy automata are labeled with
(injective) functions from guard partitions to energy functions.

\begin{lemma}
  \label{le:fomega-E}
  For $f\in \E$, $f^\omega\in \V$ is given by
  \begin{equation*}
    f^\omega( x)=
    \begin{cases}
      \lfalse &\text{if } x= \bot\text{ or } f( x)< x\,, \\
      \ltrue &\text{otherwise}\,.
    \end{cases}
  \end{equation*}
\end{lemma}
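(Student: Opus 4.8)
We must compute $f^\omega \in \V$ for a single energy function $f \in \E$, where $f^\omega$ is the $\omega$-power in the semiring-semimodule pair $(\Nrg, \Vrg)$. By the definition of the infinite product $\E \to \V$, we have $f^\omega = \prod_n f$ (the infinite product of the constant sequence $f, f, f, \dotsc$). For an input $x \in \Realnn \cup \{\bot, \infty\}$, the value $f^\omega(x)$ is $\ltrue$ precisely when the orbit $x_0 = f(x)$, $x_{k} = f(x_{k-1})$ never hits $\bot$, and $\lfalse$ otherwise. So the claim reduces to: the orbit of $x$ under iteration of $f$ avoids $\bot$ forever if and only if $x \ne \bot$ and $f(x) \ge x$.

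**Plan of proof.** The plan is to analyze the orbit $x_0, x_1, x_2, \dotsc$ directly, using the defining monotonicity-with-slope property \eqref{eq:deriv1}: for all $x \le y$, $f(y) - f(x) \ge y - x$. First I would dispose of the trivial case $x = \bot$: then $x_0 = f(\bot)$; but energy functions are partial functions on $\Realnn \cup \{\bot, \infty\}$ with $\bot$ meaning ``undefined,'' so $f(\bot) = \bot$, the orbit is immediately undefined, and $f^\omega(x) = \lfalse$, matching the first case. Next, assuming $x \ne \bot$, I would split on the sign of $f(x) - x$.

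\textbf{The case $f(x) < x$.} Here I would show the orbit strictly decreases until it falls out of the domain. Property \eqref{eq:deriv1} gives, for any two defined points $u \le v$, that $f(v) - f(u) \ge v - u$, equivalently $f(v) - v \ge f(u) - u$; so the quantity $f(t) - t$ is nondecreasing in $t$. If $f(x) < x$, then along the orbit each successive point is strictly smaller (while defined): from $x_{k-1}$ with $f(x_{k-1}) < x_{k-1}$ one gets $x_k = f(x_{k-1}) < x_{k-1}$. I would argue the decrements $x_{k-1} - x_k = x_{k-1} - f(x_{k-1}) = -(f(x_{k-1}) - x_{k-1})$ do not shrink to zero: since $x_k < x_{k-1}$, monotonicity of $f(t)-t$ gives $f(x_k) - x_k \le f(x_{k-1}) - x_{k-1} < 0$, so each gap is at least the first gap $\delta := x - f(x) > 0$. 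Hence $x_k \le x - k\delta$, which drives the orbit below any bound; since values must lie in $\Realnn \cup \{\bot,\infty\}$ and $f$ is undefined on negatives, the orbit must reach $\bot$ in finitely many steps. Therefore $f^\omega(x) = \lfalse$, as claimed.

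\textbf{The case $f(x) \ge x$.} Here I would show the orbit stays defined forever. From $f(x) \ge x$ and the monotonicity of $f(t) - t$, an easy induction gives $x_k \ge x_{k-1}$ whenever both are defined: if $x_{k-1} \ge x$ then $f(x_{k-1}) - x_{k-1} \ge f(x) - x \ge 0$, so $x_k = f(x_{k-1}) \ge x_{k-1}$, and in particular $x_k \ge x \ge 0 > \bot$, so $x_k$ is never $\bot$. Thus the orbit is nondecreasing and bounded below by $x$, never undefined, giving $f^\omega(x) = \ltrue$, matching the ``otherwise'' case. The main subtlety to get right is that $f$ being a \emph{partial} function means ``undefined'' propagates ($f(\bot) = \bot$ and, once a value leaves the domain, all subsequent values are $\bot$), so ``the orbit avoids $\bot$'' is exactly ``every iterate is a genuine real/infinite value.'' The one quantitative point requiring care is the $f(x) < x$ case: I must confirm the gap $f(t) - t$ cannot drift back toward $0$ as the orbit descends; the monotonicity $f(t) - t \ge f(x) - x$ for $t \le x$ (wait---this is the wrong direction), which I would correct by noting that since the orbit \emph{decreases}, the relevant inequality is $f(x_k) - x_k \le f(x_{k-1}) - x_{k-1}$, i.e. the slope deficit only worsens, so the uniform lower bound $k\delta$ on the total descent holds and forces the orbit out of the domain in finite time. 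Establishing this uniform descent bound is the only nonroutine step; everything else is bookkeeping on the partial-function semantics.
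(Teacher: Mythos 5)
Your proof is correct and follows essentially the same route as the paper's: dispose of $x=\bot$, show the orbit is nondecreasing (hence never $\bot$) when $f(x)\ge x$, and show a uniform descent $f^n(x)\le x-nM$ with $M=x-f(x)>0$ forcing the orbit to $\bot$ when $f(x)<x$. Your explicit observation that \eqref{eq:deriv1} makes $f(t)-t$ nondecreasing in $t$ is exactly the step the paper leaves implicit when it asserts $f^n(x)\le x-nM$.
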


\begin{definition}
  \label{de:omega_gpe}
  Let $f: P\to \E\in \GP \E$ and define $w': P\to \V$ by
  $w'( \gamma)= f( \gamma)^\omega$.  Let $w\in \GP \V$ be the
  canonicalization of $w'$, then we define $f^\omega= w$.
\end{definition}

\begin{lemma}
  \label{le:sem_matrix-o}
  For $n\ge 1$, $k< n$, $M\in \GP \E^{ n\times n}$, and $p\in \px$,
  $\sem{ M^\omega}( p)= \sem M( p)^\omega$ and
  $\sem{ M^{\omega_k}}( p)= \sem M( p)^{ \omega_k}$.
\end{lemma}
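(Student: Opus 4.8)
The plan is to exhibit $\sem{\cdot}(p)$ as a morphism that commutes with every operation appearing in the block formulas for $M^\omega$ and $M^{\omega_k}$, and then to lift this commutation through those formulas by induction on the matrix dimension $n$. The base ingredient is the single-element identity $\sem{f^\omega}(p)=\sem f(p)^\omega$ for $f\in\GP\E$, which is immediate from Definition~\ref{de:omega_gpe}: writing $f^\omega=w$ for the canonicalization of the map $w'(\gamma)=f(\gamma)^\omega$, canonicalization leaves the induced semantic function unchanged (merging guards of equal value into their disjunction does not alter $\sem{\cdot}(p)$), so if $\gamma\in\dom(f)$ is the unique guard with $p\models\gamma$ then $\sem{f^\omega}(p)=w'(\gamma)=f(\gamma)^\omega=\sem f(p)^\omega$. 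The very same canonicalization argument used for the semiring operations in Lemma~\ref{le:semop} shows that $\sem{\cdot}(p)$ also commutes with the sum on $\GP\V$ and with the action $\GP\E\times\GP\V\to\GP\V$; combined with Lemma~\ref{le:semop}, this makes $\sem{\cdot}(p)$ a morphism of the whole semiring-semimodule pair that additionally respects the star operation and the single-element $\omega$-operation.

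I would then prove $\sem{M^\omega}(p)=\sem M(p)^\omega$ by induction on $n$. Applying $\sem{\cdot}(p)$ entrywise carries a matrix in $\GP\E^{n\times n}$ to its projection in $\E^{n\times n}$ and, by the morphism property together with Lemma~\ref{le:sem_matrix}, commutes with matrix sum, matrix product, and the matrix star. For $n=1$ the claim is the single-element identity above. For $n\ge2$, decompose $M=\bigl[\begin{smallmatrix}a&b\\c&d\end{smallmatrix}\bigr]$ with $a,d$ square of dimension strictly below $n$; the block formula writes $M^\omega$ in terms of $(a\oplus bd^*c)^\omega$, $d^\omega$, $(d\oplus ca^*b)^\omega$, $a^\omega$ and the operations $\oplus$, $\otimes$, ${}^*$. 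Since the block decomposition of $\sem M(p)$ is exactly the entrywise image of that of $M$, and since $\sem{\cdot}(p)$ commutes with $\oplus,\otimes,{}^*$ and---by the induction hypothesis---with $\omega$ on the strictly smaller blocks, applying $\sem{\cdot}(p)$ to the block formula for $M^\omega$ reproduces verbatim the block formula for $\sem M(p)^\omega$, whence the two coincide.

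The identity $\sem{M^{\omega_k}}(p)=\sem M(p)^{\omega_k}$ then follows directly from its block formula $M^{\omega_k}=\bigl[\begin{smallmatrix}(a\oplus bd^*c)^\omega\\ d^*c(a\oplus bd^*c)^\omega\end{smallmatrix}\bigr]$: the block $a\oplus bd^*c$ has dimension $k<n$, so $(a\oplus bd^*c)^\omega$ is governed by the case of $M^\omega$ just established, while the remaining factor $d^*c$ is a combination of star and product that $\sem{\cdot}(p)$ respects. The main obstacle is the inductive step for $M^\omega$: it requires that $(\GP\E,\GP\V)$ actually carries the semiring-semimodule structure making these block formulas legitimate definitions, and---what goes beyond Lemma~\ref{le:semop}---that $\sem{\cdot}(p)$ is a morphism of the full pair, in particular commuting with the $K$-action that produces the $\V$-valued terms $bd^\omega$ and $ca^\omega$. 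Once this morphism property is secured, the transfer through the block formulas is a routine structural verification, and decomposition-independence of the formulas may, if needed, be read off from the semantic characterisation of equality as in Lemma~\ref{le:semeq}.
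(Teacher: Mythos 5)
Your proposal is correct and follows essentially the same route as the paper: reduce the matrix statements to the scalar identity $\sem{f^\omega}(p)=\sem f(p)^\omega$ by pushing $\sem{\cdot}(p)$ through the block formulas (the paper phrases your induction on $n$ as ``the formulas involve only additions, multiplications, stars, and $\omega$s'' and invokes Lemmas~\ref{le:semop} and~\ref{le:sem_matrix}), then prove the scalar case by exactly your canonicalization argument from Definition~\ref{de:omega_gpe} and Lemma~\ref{le:canon}. If anything you are more scrupulous than the paper, which silently assumes that $\sem{\cdot}(p)$ also commutes with the sum and the $\GP\E$-action on $\GP\V$ needed for the terms $bd^\omega$ and $ca^\omega$.
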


\begin{theorem}
  \label{th:family-energy}
  Let $\mcal F$ be a featured energy automaton and $p\in \px$.  Then
  $\| \proj p{ \mcal F}\|= \sem{\| \mcal F\|}( p)$.
\end{theorem}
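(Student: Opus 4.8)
The plan is to mirror exactly the proof of Theorem~\ref{th:comput-1}, but for the B{\"u}chi value instead of the reachability value, substituting the $\omega_k$-matrix operation for the $^*$-operation. The key observation is that both proofs reduce to a single commutation step: the semantic representation $\sem{\cdot}$ commutes with the matrix operations used to express the value.

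First I would recall that, by the result stated at the end of Sect.~4.2, the B{\"u}chi value of a $K$-weighted automaton is $\| \mcal S\|= \alpha M^{\omega_k}$, where $(\alpha, M, k)$ is the matrix representation of $\mcal S$. Applied to the featured energy automaton $\mcal F$, viewed as a $\GP \E$-weighted automaton (which is legitimate since $\GP \E$ is a $^*$-continuous Kleene algebra by Proposition~\ref{th:gpk-semiring}, and $(\GP \E, \GP \V)$ forms the corresponding $\omega$-algebra), this gives $\| \mcal F\|= \alpha M^{\omega_k}$ for the matrix representation $(\alpha, M, k)$ of $\mcal F$ over $\GP \E$. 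I would then compute
\begin{equation*}
  \sem{\| \mcal F\|}( p)= \sem{\alpha M^{\omega_k}}( p)= \sem \alpha( p)\, \sem{M}( p)^{\omega_k}.
\end{equation*}
The first equality here uses that the action of $\GP \E$ on $\GP \V$ (i.e.\ the product $\alpha M^{\omega_k}$, which pairs a vector over $\GP \E$ with a vector over $\GP \V$) is computed componentwise via $\oplus$ and $\otimes$, so Lemma~\ref{le:semop} lets $\sem{\cdot}$ distribute over it pointwise; the second equality is precisely Lemma~\ref{le:sem_matrix-o}, which supplies $\sem{M^{\omega_k}}( p)= \sem M( p)^{\omega_k}$.

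Finally I would observe that the matrix representation of the projection $\proj p{ \mcal F}$ is $(\sem \alpha( p), \sem M( p), k)$. Indeed $\alpha$ records which states are initial and $k$ records the accepting states, neither of which depends on $p$, while each transition label $f\in \GP \E$ projects to $\sem f( p)\in \E$ by definition of projection; hence the transition matrix of $\proj p{ \mcal F}$ is exactly $\sem M( p)$. Applying the same B{\"u}chi-value formula to the ordinary energy automaton $\proj p{ \mcal F}$ yields $\| \proj p{ \mcal F}\|= \sem \alpha( p)\, \sem M( p)^{\omega_k}$, which matches the expression computed above, completing the proof.

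The main obstacle I anticipate is the first equality step, namely justifying that $\sem{\cdot}$ commutes with the bilinear product $\alpha M^{\omega_k}$ that crosses the semiring--semimodule boundary. Lemma~\ref{le:semop} is stated for $\oplus$, $\otimes$ and $^*$ within $\GP K$, so strictly one needs the analogous compatibility of $\sem{\cdot}$ with the $\GP \E$-action on $\GP \V$ and with the monoid sum on $\GP \V$. I expect this to hold for the same reason as Lemma~\ref{le:semop} — the action and sum on $\GP \V$ are defined entrywise by canonicalization of pointwise operations, and $\sem{\cdot}$ evaluates at a fixed product $p$ — but it is worth flagging that one may need a small semimodule analogue of Lemma~\ref{le:semop} (the compatibility of $\sem{\cdot}$ with $\oplus$ on $\GP \V$ and with the $\GP \E$-action), which follows by the identical canonicalization argument.
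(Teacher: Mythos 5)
Your proposal is correct and follows essentially the same route as the paper's own proof: both reduce to the identity $\sem{\| \mcal F\|}( p)= \sem{ \alpha M^{ \omega_k}}( p)= \sem\alpha( p) \sem M( p)^{ \omega_k}$ via Lemmas~\ref{le:semop} and~\ref{le:sem_matrix-o}, followed by the observation that $\proj p{ \mcal F}$ has matrix representation $( \sem\alpha( p), \sem M( p), k)$. Your flagged concern about $\sem{\cdot}$ commuting with the semiring--semimodule action is a fair point of rigor that the paper itself glosses over (it cites only Lemma~\ref{le:semop}), and your remark that it follows by the same canonicalization argument is the right resolution.
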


\begin{proof}
  We have
  $\sem{\| \mcal F\|}( p)= \sem{ \alpha M^{ \omega_k}}( p)= \sem
  \alpha( p) \sem M( p)^{ \omega_k}$
  by Lemmas~\ref{le:semop} and~\ref{le:sem_matrix-o}.  As the matrix
  representation of $\proj p{ \mcal F}$ is
  $( \sem \alpha( p), \sem M( p), k)$, the result follows.
\end{proof}

\section{Conclusion}

We have introduced featured (semiring-) weighted automata and shown
that, essentially, verification of their properties can be reduced to
checking properties of weighted automata.  This is because, from a
mathematical point of view, a featured weighted automaton over a
semiring $K$ is the same as a weighted automaton over the semiring of
functions from products (sets of features) to $K$.

Representing functions from products to $K$ as injective functions
from partitions of the set of products to $K$, we have exposed
algorithms which will compute featured weighted reachability in case
$K$ is a $^*$-continuous Kleene algebra.  It is easy to see that these
extend to the non-idempotent case of $K$ being a \emph{Conway
  semiring}.  The essence in our approach does not lie in these
technical details, but in the fact that we pass from $K$ to a semiring
of functions into $K$; this typically preserves properties one is
interested in.

We have also seen that energy properties are preserved when passing
from the weighted to the featured weighted setting; generally, if
$( K, V)$ is a $^*$-continuous Kleene $\omega$-algebra, then the
semiring-semimodule pair of functions from products to $K$ and $V$,
respectively, will also be such.

We are interested in extending the setting of this paper to other
weighted structures beyond semirings, for example the valuation
monoids of~\cite{DBLP:journals/ijfcs/DrosteM11}.  This will enable
feature-based treatment of properties such as limit-average cost and
will be useful for an extension to the timed setting
of~\cite{DBLP:conf/splc/CordySHL12}.  From a practical point of view,
we have shown in~\cite{DBLP:conf/splc/OlaecheaFAL16} that efficient
algorithms are available for the limit-average setting.

\newpage

\section*{Appendix: Proofs}

\begin{proof}[Proof of Lemma~\ref{le:canon}]
  To see that $\tilde f$ is injective, let
  $\tilde \gamma_1, \tilde \gamma_2 \in \tilde P$ and assume
  $\tilde f( \tilde \gamma_1)= \tilde f( \tilde \gamma_2)$.  Let
  $\Gamma_1, \Gamma_2\in P'$ such that
  $\tilde \gamma_1= \bigvee \Gamma_1$ and
  $\tilde \gamma_2= \bigvee \Gamma_2$, then
  $f( \Gamma_1)= f( \Gamma_2)$ and hence $\Gamma_1= \Gamma_2$,
  \ie~$\tilde \gamma_1= \tilde \gamma_2$.

  For the second claim, let $\gamma\in P$, then $\gamma\in \Gamma$ for
  some $\Gamma\in P'$, hence
  $\sem \gamma\subseteq \sem{ \bigvee \Gamma}$.  To see uniqueness,
  let $\tilde \gamma_1, \tilde \gamma_2 \in \tilde P$ and assume
  $\sem \gamma\subseteq \sem{ \tilde \gamma_1}$ and
  $\sem \gamma\subseteq \sem{ \tilde \gamma_2}$.  As
  $\sem \gamma\ne \emptyset$, this implies that
  $\sem{ \tilde \gamma_1}\cap \sem{ \tilde \gamma_2}\ne \emptyset$,
  hence $\tilde \gamma_1= \tilde \gamma_2$. 
\end{proof}

\begin{proof}[Proof of Lemma~\ref{le:gp-land-unique}]
  Existence of $\gamma_1$ and $\gamma_2$ is obvious by definition of
  $P_1\wedge P_2$.  For uniqueness, assume that there is $\gamma_1'\in
  P_1$ with $\gamma_1'\ne \gamma_1$ and $\gamma= \gamma_1'\land
  \gamma_2$.  Then $\gamma= \gamma_1\land \gamma_1'\land \gamma$, but
  as $P_1$ is a partition, $\sem{ \gamma_1\land \gamma_1'}= \sem{
    \gamma_1}\cap \sem{ \gamma_1'}= \emptyset$, hence $\sem \gamma=
  \emptyset$, a contradiction. 
\end{proof}

\begin{proof}[Proof of Lemma~\ref{le:semop}]
  Let $f_1: P_1\to K$ and $f_2: P_2\to K$.  Let $\gamma_1\in P_1$,
  $\gamma_2\in P_2$ be the unique feature guards for which $p\models
  \gamma_1$ and $p\models \gamma_2$, then $\sem{ f_1}( p)= f_1(
  \gamma_1)$ and $\sem{ f_2}( p)= f_2( \gamma_2)$.

  We have $p\in \sem{ \gamma_1\land \gamma_2}$, hence
  $\sem{ \gamma_1\land \gamma_2}\ne \emptyset$, so that
  $\gamma_1\land \gamma_2\in P_1\wedge P_2$.  Using the notation of
  Def.~\ref{de:op_gpk},
  $s'( \gamma_1\land \gamma_2)= f_1( \gamma_1)\oplus f_2( \gamma_2)=
  \sem{ f_1}( p)\oplus \sem{ f_2}( p)$.
  Write $s: P\to K$ and let $\tilde \gamma\in P$ be such that
  $\sem{ \gamma_1\land \gamma_2}\subseteq \sem{ \tilde \gamma}$,
  \cf~Lemma~\ref{le:canon}.  Then $p\models \tilde \gamma$, hence
  $\sem{ f_1\oplus f_2}( p)=( f_1\oplus f_2)( \tilde \gamma)= s'(
  \gamma_1\land \gamma_2)$.
  The proofs for $\otimes$ and~$^*$ are similar. 
\end{proof}

\begin{proof}[Proof of Lemma~\ref{le:semeq}]
  It is clear that $f_1= f_2$ implies $\sem{ f_2}= \sem{ f_2}$.  For
  the other direction, write $f_1: P_1\to K$ and $f_2: P_2\to K$.  For
  each $p\in \px$, let
  $\gamma_p= \bigland_{ f\in p} f\land \bigland_{ f\notin p} \neg f\in
  \Bool( N)$ denote its \emph{characteristic feature guard}; note that
  $\sem{ \gamma_p}=\{ p\}$.

  Let $P\in \GPart$ be the guard partition
  $P=\{ \gamma_p\mid p\in \px\}$, and define functions
  $f_1', f_2': P\to K$ by $f_1'( \gamma_p)= \sem{ f_1}( p)$ and
  $f_2'( \gamma_p)= \sem{ f_2}( p)$.  By definition, $f_1$ is the
  canonicalization of $f_1'$ and $f_2$ the canonicalization of
  $f_2'$.  By construction, $\sem{ f_1}= \sem{ f_2}$ implies $f_1'=
  f_2'$, hence $f_1= f_2$. 
\end{proof}

\begin{proof}[Proof of Prop.~\ref{th:gpk-semiring}]
  We show that the set $K^\px$ of functions from $\px$ to $K$ forms a
  $^*$-continuous Kleene algebra; the theorem is then clear from
  Lemmas~\ref{le:semop} and~\ref{le:semeq}.  For functions $\phi_1,
  \phi_2: \px\to K$, define $\phi_1\oplus \phi_2$, $\phi_1\otimes
  \phi_2$ and $\phi_1^*$ by $( \phi_1\oplus \phi_2)( p)= \phi_1(
  p)\oplus \phi_2( p)$, $( \phi_1\otimes \phi_2)( p)= \phi_1(
  p)\otimes \phi_2( p)$, and $\phi_1^*( p)= \phi_1( p)^*$.  Let $0,
  1: \px\to K$ be the functions $0( p)= 0$, $1( p)= 1$.  Then $(
  K^\px, \oplus, \otimes, 0, 1)$ forms an idempotent semiring.

  We miss to show $^*$-continuity.  Let $p\in \px$ and
  $\phi_1, \phi_2, \phi_3: \px\to K$, then
  \begin{align*}
    \big( \phi_1 \phi_2^* \phi_3\big)( p) &= \phi_1( p) \phi_2^*( p)
    \phi_3( p) \\
    &= \phi_1( p) \phi_2( p)^* \phi_3( p) \\
    &= \phi_1( p)\big(
    \bigoplus_{ n\ge 0} \phi_2( p)^n \big) \phi_3( p) \\
    &= \bigoplus_{ n\ge
      0} \phi_1( p) \phi_2( p)^n \phi_3( p) \\
    &= \bigoplus_{ n\ge 0} \phi_1(
    p) \phi_2^n( p) \phi_3( p) \\
    &= \big( \bigoplus_{ n\ge 0} \phi_1
    \phi_2^n \phi_3\big)( p)
  \end{align*}
\end{proof}

\begin{proof}[Proof of Lemma~\ref{le:sem_matrix}]
  As the formula for computing $M^*$ involves only additions,
  multiplications and stars, this is clear by
  Lemma~\ref{le:semop}. 
\end{proof}

\begin{proof}[Proof of Lemma~\ref{le:fomega-E}]
  The claim is clear for $x= \bot$, so let $x\ne \bot$.  If
  $f( x)\ge x$, then also $f^n( x)\ge x$ for all $n\ge 0$, hence
  $f^\omega( x)= \ltrue$ by definition.

  If $f( x)< x$, then $f( x)\le x- M$, with $M= x- f( x)> 0$.
  By~\eqref{eq:deriv1}, $f^n( x)\le x- n M$ for all $n\ge 0$, hence
  there must be $k\ge 0$ for which $f^k( x)= \bot$, whence
  $f^\omega( x)= \lfalse$.
\end{proof}

\begin{proof}[Proof of Lemma~\ref{le:sem_matrix-o}]
  The formulas for $M^\omega$ and $M^{ \omega_k}$ involve only
  additions, multiplications, stars, and $\omega$s.  Invoking
  Lemmas~\ref{le:semop} and~\ref{le:sem_matrix}, we see that the proof
  will be finished once we show that for $f\in \GP \E$, $\sem{
    f^\omega}( p)= \sem f( p)^\omega$.

  Write $f: P\to \E$ and let $\gamma\in P$ be the unique feature guard
  for which $p\models \gamma$.  Then $\sem f( p)= f( \gamma)$.  Using
  the notation of Def.~\ref{de:omega_gpe},
  $w'( \gamma)= \sem f( p)^\omega$.  Write $w: P'\to \V$ and let
  $\tilde \gamma\in P'$ be such that
  $\sem \gamma\subseteq \sem{ \tilde \gamma}$,
  \cf~Lemma~\ref{le:canon}.  Then $p\models \tilde \gamma$, hence
  $\sem{ f^\omega}( p)= f^\omega( \tilde \gamma)= w'( \gamma)$.
\end{proof}


\begin{thebibliography}{10}

\bibitem{book/BloomE93}
S.~L. Bloom and Z.~{\'E}sik.
\newblock {\em Iteration Theories: The Equational Logic of Iterative
  Processes}.
\newblock EATCS monographs on theoretical computer science. {Springer}, 1993.

\bibitem{DBLP:conf/hybrid/BouyerFLM10}
P.~Bouyer, U.~Fahrenberg, K.~G. Larsen, and N.~Markey.
\newblock Timed automata with observers under energy constraints.
\newblock In {\em HSCC}. ACM, 2010.

\bibitem{DBLP:conf/formats/BouyerFLMS08}
P.~Bouyer, U.~Fahrenberg, K.~G. Larsen, N.~Markey, and J.~Srba.
\newblock Infinite runs in weighted timed automata with energy constraints.
\newblock In {\em FORMATS}, vol. 5215 of {\em {LNCS}}. {Springer}, 2008.

\bibitem{DBLP:conf/qest/BouyerLM12}
P.~Bouyer, K.~G. Larsen, and N.~Markey.
\newblock Lower-bound constrained runs in weighted timed automata.
\newblock In {\em QEST}. IEEE Computer Society, 2012.

\bibitem{DBLP:conf/icalp/ChatterjeeD10}
K.~Chatterjee and L.~Doyen.
\newblock Energy parity games.
\newblock In {\em ICALP (2)}, vol. 6199 of {\em {LNCS}}. {Springer}, 2010.

\bibitem{DBLP:journals/tse/ClassenCSHLR13}
A.~Classen, M.~Cordy, P.~Schobbens, P.~Heymans, A.~Legay, and J.~Raskin.
\newblock Featured transition systems: Foundations for verifying
  variability-intensive systems and their application to {LTL} model checking.
\newblock {\em {IEEE} Trans. Software Eng.}, 39(8):1069--1089, 2013.

\bibitem{DBLP:conf/splc/CordySHL12}
M.~Cordy, P.~Schobbens, P.~Heymans, and A.~Legay.
\newblock Behavioural modelling and verification of real-time software product
  lines.
\newblock In {\em SPLC}. {ACM}, 2012.

\bibitem{DBLP:conf/icse/CordySHL13}
M.~Cordy, P.~Schobbens, P.~Heymans, and A.~Legay.
\newblock Beyond boolean product-line model checking: dealing with feature
  attributes and multi-features.
\newblock In {\em ICSE}. {IEEE} / {ACM}, 2013.

\bibitem{DBLP:conf/csl/DegorreDGRT10}
A.~Degorre, L.~Doyen, R.~Gentilini, J.-F. Raskin, and S.~Torunczyk.
\newblock Energy and mean-payoff games with imperfect information.
\newblock In {\em CSL}, 2010.

\bibitem{book/DrosteKV09}
M.~Droste, W.~Kuich, and H.~Vogler.
\newblock {\em Handbook of Weighted Automata}.
\newblock {Springer}, 2009.

\bibitem{DBLP:journals/ijfcs/DrosteM11}
M.~Droste and I.~Meinecke.
\newblock Weighted automata and regular expressions over valuation monoids.
\newblock {\em Int. J. Found. Comput. Sci.}, 22(8):1829--1844, 2011.

\bibitem{DBLP:conf/dlt/EsikFL15}
Z.~{\'{E}}sik, U.~Fahrenberg, and A.~Legay.
\newblock $^*$-continuous {K}leene $\omega$-algebras.
\newblock In {\em DLT}, vol. 9168 of {\em {LNCS}}. {Springer}, 2015.

\bibitem{DBLP:journals/corr/EsikFL15a}
Z.~{\'{E}}sik, U.~Fahrenberg, and A.~Legay.
\newblock $^*$-continuous {K}leene $\omega$-algebras for energy problems.
\newblock In {\em FICS}, vol. 191 of {\em {EPTCS}}, 2015.

\bibitem{DBLP:conf/atva/EsikFLQ13}
Z.~{\'{E}}sik, U.~Fahrenberg, A.~Legay, and K.~Quaas.
\newblock {K}leene algebras and semimodules for energy problems.
\newblock In {\em ATVA}, vol. 8172 of {\em {LNCS}}. {Springer}, 2013.

\bibitem{journals/mfm/EsikK02}
Z.~{\'E}sik and W.~Kuich.
\newblock Locally closed semirings.
\newblock {\em Monatsh. Math.}, 137(1):21--29, 2002.

\bibitem{journals/sgf/EsikK07}
Z.~{\'E}sik and W.~Kuich.
\newblock On iteration semiring-semimodule pairs.
\newblock {\em Semigroup Forum}, 75:129--159, 2007.

\bibitem{inbook/EsikK09}
Z.~{\'E}sik and W.~Kuich.
\newblock Finite automata.
\newblock In {\em Handbook of Weighted Automata\/} \cite{book/DrosteKV09}.

\bibitem{conf/ictac/FahrenbergJLS11}
U.~Fahrenberg, L.~Juhl, K.~G. Larsen, and J.~Srba.
\newblock Energy games in multiweighted automata.
\newblock In {\em ICTAC}, vol. 6916 of {\em {LNCS}}. {Springer}, 2011.

\bibitem{DBLP:conf/mfcs/Kozen90}
D.~Kozen.
\newblock On {K}leene algebras and closed semirings.
\newblock In {\em MFCS}, vol. 452 of {\em {LNCS}}. {Springer}, 1990.

\bibitem{DBLP:journals/iandc/Kozen94}
D.~Kozen.
\newblock A completeness theorem for {K}leene algebras and the algebra of
  regular events.
\newblock {\em Inf. Comput.}, 110(2):366--390, 1994.

\bibitem{DBLP:conf/splc/OlaecheaFAL16}
R.~Olaechea, U.~Fahrenberg, J.~M. Atlee, and A.~Legay.
\newblock Long-term average cost in featured transition systems.
\newblock In {\em SPLC}. {ACM}, 2016.

\bibitem{DBLP:conf/lata/Quaas11}
K.~Quaas.
\newblock On the interval-bound problem for weighted timed automata.
\newblock In {\em LATA}, vol. 6638 of {\em {LNCS}}. {Springer}, 2011.

\bibitem{DBLP:journals/csur/ThumAKSS14}
T.~Th{\"{u}}m, S.~Apel, C.~K{\"{a}}stner, I.~Schaefer, and G.~Saake.
\newblock A classification and survey of analysis strategies for software
  product lines.
\newblock {\em {ACM} Comput. Surv.}, 47(1):6:1--6:45, 2014.

\end{thebibliography}
\end{document}